\documentclass{article}
\usepackage{ijcai26}

\usepackage{times}
\usepackage{soul}
\usepackage{url}
\usepackage[hidelinks]{hyperref}
\usepackage[utf8]{inputenc}
\usepackage[small]{caption}
\usepackage{graphicx}
\usepackage{amsmath}
\usepackage{amsthm}
\usepackage{booktabs}
\usepackage{algorithm}
\usepackage{algorithmic}
\usepackage[switch]{lineno}
\usepackage{amssymb}
\usepackage{xcolor}
\usepackage{multirow}
\usepackage{tcolorbox}

\newtheorem{theorem}{Theorem}
\newtheorem{definition}{Definition}
\newtheorem{lemma}{Lemma}

\title{A Faster Deterministic Algorithm for Kidney Exchange via Representative Set}

\author{
Kangyi Tian
\And
Mingyu Xiao\thanks{Corresponding Author}\\
\affiliations
School of Computer Science and Engineering, University of Electronic Science and Technology of China, Chengdu, China\\
\emails
kangyitian947@gmail.com,
myxiao@gmail.com
}

\begin{document}

\maketitle

\begin{abstract}
The Kidney Exchange Problem is a prominent challenge in healthcare and economics, arising in the context of organ transplantation. It has been extensively studied in artificial intelligence and optimization. In a kidney exchange, a set of donor-recipient pairs and altruistic donors are considered, with the goal of identifying a sequence of exchanges—comprising cycles or chains starting from altruistic donors—such that each donor provides a kidney to the compatible recipient in the next donor-recipient pair. 
Due to constraints in medical resources, some limits are often imposed on the lengths of these cycles and chains.
These exchanges create a network of transplants aimed at maximizing the total number, $t$, of successful transplants.  Recently, this problem was deterministically solved in $O^*(14.34^t)$ time (IJCAI 2024). In this paper, we introduce the representative set technique for the Kidney Exchange Problem, showing that the problem can be deterministically solved in $O^*(6.855^t)$ time.
\end{abstract}

\section{Introduction}

Kidney disease has become an increasingly significant global issue. According to international consensus~\cite{francis2024chronic}, approximately 850 million people worldwide are estimated to suffer from kidney disease. Patients with kidney disease are typically treated through either \emph{dialysis} or \emph{kidney transplantation}. In terms of therapeutic outcomes, kidney transplantation is generally preferred over dialysis for treating kidney disease. However, there is a substantial gap between the number of kidney donors and recipients. Patients unable to find a compatible donor for a transplant are placed on a waiting list. Unfortunately, as the waiting list grows longer each year, many patients miss the optimal window for treatment and may even lose their lives while waiting. 

Often, patients on the waiting list have family members or friends willing to donate a kidney, but the kidney may not be compatible. To address this challenge, Kidney Paired Donation (KPD), also known as Kidney Exchange, was introduced by Rapaport~\cite{Rapaport86}. Since its introduction in 1986, KPD has been implemented in several countries and has helped many people receive compatible kidneys~\cite{pahwa2012paired,DBLP:conf/sigecom/DickersonMPST16}.

In Kidney Paired Donation, each participant is either an altruistic donor or a donor-recipient pair. To ensure that every recipient receives a compatible kidney, kidney donations are organized into \emph{chain models} or \emph{cycle models}. In chain models, an altruistic donor first donates a kidney to the next donor-recipient pair. Then, each subsequent donor-recipient pair in the chain receives a kidney from the previous pair and donates a kidney to the next pair, if one exists. In cycle models, each pair receives a kidney from the previous pair and donates a kidney to the next pair. It is important to note that in both models, a pair can choose to exit the program after receiving a kidney, as they have no legal obligation to donate. From a fairness perspective, we aim to avoid this, particularly in cycle models. To prevent this issue, all kidney transplant surgeries are ideally performed simultaneously. However, due to medical and logistical constraints, there can only be a limited number of surgeries at one time. As a result, the length of chains and cycles cannot be excessively long.\footnote{The length restriction for chains is generally less stringent than for cycles. Some studies have considered versions with more relaxed chain length restrictions~\cite{Anderson2015FindingLC,DBLP:journals/msom/GlorieKW14}.}

To optimize the number of kidney exchanges, the central problem in Kidney Paired Donation can be framed as follows. First, a compatibility graph is constructed based on medical compatibility criteria. In this graph, each vertex represents either an altruistic donor or a donor-recipient pair, and a directed edge exists between two vertices if one can donate a compatible kidney to the other. We can assume that self-loops do not exist in the compatibility graph. Our computational task then becomes the problem of finding the largest vertex-disjoint packing of cycles and paths starting from an altruistic vertex in the compatibility graph, subject to restrictions on the length of paths and cycles. This problem is referred to as the Kidney Exchange Problem.

We now present the formal definition of the Kidney Exchange Problem. Let $G$ be the compatibility graph, and $B$ be the set of altruistic vertices. We are given two nonnegative integers, $l_p$ and $l_c$. The \textit{length} of a path or cycle is defined as the number of edges it contains. A path is \emph{feasible} if it starts at an altruistic vertex in $B$ and has a length of at most $l_p$, while a cycle is \textit{feasible} if it does not contain any vertices in $B$ and its length does not exceed $l_c$.
A path-cycle packing is a set of vertex-disjoint cycles and paths. A path-cycle packing is \textit{feasible} if all cycles and paths in it are feasible. 

\begin{tcolorbox}[colframe=black, colback=white]
\textbf{Kidney Exchange Problem (KEP)}\\
\textbf{Input:} A directed graph $G = (V,A)$ without self-loops, an altruistic vertex set $B \subseteq V$ without any edge entering a vertex in $B$, two nonnegative integers $l_p$ and $l_c$, and a target integer $t$.\\
\textbf{Output:} A feasible path-cycle packing such that the total length of cycles and paths in it is at least $t$.
\end{tcolorbox}

An instance of KEP is denoted by \( (G, B, l_p, l_c, t) \) .

\subsection{Related Work}

The Kidney Exchange Problem (KEP) has been extensively studied since its introduction in \cite{Rapaport86}. In most studies, KEP is typically modeled as a packing problem involving vertex-disjoint chains and cycles. Due to variations in the types of packings and constraints on the lengths of chains and cycles, many variants have been proposed in recent years \cite{DBLP:journals/eor/ConstantinoKVR13,DBLP:journals/jea/ManloveO14,DBLP:journals/msom/GlorieKW14,Anderson2015FindingLC}.

Numerous practical algorithms have been developed to address the Kidney Exchange Problem. A traditional approach to solving KEP is through Integer Programming. Based on Integer Programming, \cite{DBLP:journals/jea/ManloveO14} designed an algorithm and deployed it in software, which has been used to identify kidney exchange pairs in real-world applications. More recently, a new method using graph neural networks was introduced to KEP by \cite{DBLP:conf/ictai/PimentaAL23}.

Theoretical studies have shown that KEP is NP-hard, even for some very restricted versions \cite{DBLP:journals/talg/KrivelevichNSYY07,DBLP:conf/sigecom/AbrahamBS07,DBLP:journals/dmaa/BiroMR09}. Due to the NP-hardness of KEP, approximation algorithms \cite{DBLP:journals/talg/KrivelevichNSYY07,DBLP:conf/atal/JiaTWZ17}, randomized algorithms \cite{DBLP:journals/algorithms/LinWFF19}, and parameterized algorithms \cite{DBLP:conf/ijcai/XiaoW18,DBLP:conf/ijcai/MaitiD22,DBLP:conf/ijcai/Hebert-JohnsonL24} have been introduced to handle its computational complexity.

We focus on  KEP parameterized by the number of covered recipients $t$. Although this problem has been studied for many years, whether it is fixed-parameter tractable (FPT) with respect to~$t$ remained unresolved until recently.

\cite{DBLP:conf/ijcai/MaitiD22} was the first  work to show that KEP is FPT by employing Color-Coding. Their randomized algorithm runs in time $O^*(8^t)$  and can be derandomized to run in $O^*(161^t)$. More recently, using the technique of arithmetic circuit representation of polynomials~\cite{DBLP:journals/ipl/Williams09}, \cite{DBLP:conf/ijcai/Hebert-JohnsonL24} developed a randomized algorithm with an improved running time of $O^*(4^t)$. After derandomization, the running time becomes $O^*(14.34^t)$. 
Very recently, Banik et al.~\cite{banik2025kidneyexchangefasterparameterized} proposed a simple algorithm with a claimed runtime of $O^*(10.88^{t})$. 
However, their proof is incomplete, and we observe that their algorithm in fact requires $O^*(10.88^{2t})$ time. 
The discrepancy arises from their use of the color-coding technique introduced for KEP in \cite{DBLP:conf/ijcai/MaitiD22}, where they assign $t$ colors to the vertices. 
This is insufficient, because a feasible solution may contain up to $2t-2$ vertices---for instance, two disjoint cycles each of length $t-1$. 
Thus, one must instead use roughly $2t$ colors, which increases the exponential dependency from $10.88^{t}$ to $10.88^{2t}$. 
A similar issue explains why the randomized algorithm in \cite{DBLP:conf/ijcai/Hebert-JohnsonL24} 
runs in $O^{*}(4^{t})$ time rather than $O^{*}(2^{t})$.


\subsection{Our Contribution and Method}

The main contribution of this paper is to introduce the \emph{representative set} technique to KEP, leading to a \emph{deterministic algorithm with running time $O^*(6.855^t)$}. This improves upon the previous best deterministic bound of $O^*(14.34^t)$~\cite{DBLP:conf/ijcai/Hebert-JohnsonL24}.

In fact, by applying our Lemma~\ref{lem-2t}, we can directly improve the running time of \citeauthor{DBLP:conf/ijcai/MaitiD22}'s randomized algorithm to $O^*(4^t)$.
See Theorem~\ref{thm-3} in Section~\ref{sec-random} for the detailed proof.
However, straightforward derandomization leads to a time bound of $O^*(30^t)$. Achieving better deterministic performance requires more advanced techniques.

From a technical perspective, our approach begins with a dynamic programming (DP) algorithm that step-by-step constructs all feasible path-cycle packings of total length at most $t$. However, a direct implementation faces a major bottleneck: the number of intermediate packings can be exponential and is not bounded by a function of $t$. Consequently, this approach does not yield a truly parameterized algorithm.

To address this, we compress the set of feasible path-cycle packings at each DP state. The key insight is that we only care about the set of recipients covered by the paths and cycles, not their specific structure. Thus, one set of cycles and paths can be replaced with another as long as they cover the same vertices.

To enable this compression, we employ the \emph{representative set} technique~\cite{DBLP:journals/jacm/FominLPS16-rep}. A representative set is a carefully selected subset that captures the essential combinatorial properties of the larger family. In our setting, it enables us to preserve optimality while significantly reducing the search space.

Although representative sets have been well studied~\cite{DBLP:journals/jacm/FominLPS16-rep}, existing methods cannot be directly applied to KEP due to the presence of local constraints and problem-specific structural features. In particular, our dynamic programming must be carefully designed to ensure that no valid solutions are lost during the compression step. As a result, our correctness proof and running time analysis differ significantly from those of earlier applications of the representative set method.

\section{Preliminaries}
Let $G=(V,A)$ denote a directed graph. We will let $n=|V|$.
The graph $G$ will be used to denote the the compatibility graph in KEP.
We will always use $B$ to represent the set of all altruistic donors. All vertices in $B$ are sources, i.e., no edge enters a vertex in $B$.

A \emph{path} of length $k-1$ is sequence of distinctive vertices
$v_1 v_2 \cdots v_k$ such that \((v_i, v_{i+1}) \in A\) for all \(i = 1, 2, \cdots, k-1\).
For a path $v_1 v_2 \cdots v_k$, if there is also an edge \((v_k, v_{1}) \in A\), we call $v_1 v_2 \cdots v_k v_1$ a \emph{cycle} of length $k$.
For a path $P=v_1v_2\cdots v_k$ (resp., a cycle $C=v_1v_2\cdots v_k v_1$), we let $V(P)=\{v_1,v_2,\cdots,v_k\}$ (resp., $V(C)=\{v_1,v_2,\cdots,v_k\}$).
For a path-cycle packing $\mathcal{D}$, we define the \emph{size} $|\mathcal{D}|$ as the number of cycles and paths in $\mathcal{D}$.
We also define the \emph{length} $c(\mathcal{D})$ as the sum of lengths of all paths and cycles in $\mathcal{D}$. Note that $c(\mathcal{D})$ may be smaller than the number of vertices appearing in $\mathcal{D}$.
A singleton $\{D\}$ may be simply written as $D$.


Let  $\mathcal{F}$ be a collection of sets and $S$ be another set. we define $\mathcal{F} \uplus S = \{X \cup S| X\in \mathcal{F}, X \cap S = \emptyset \}$.

Let $\mathcal{S}$ be a family of sets and the cardinality $|\mathcal{S}|$ denote the number of sets in $\mathcal{S}$. We say that a family is \emph{$p$-family} if $|X|=p$ holds for all sets $X \in \mathcal{S}$. 

We use $O^*(f(k))$ to denote $f(k)n^{O(1)}$.

\paragraph{Representative Set.}
Representative set is one of the main techniques used in this paper.
Usually,  representative sets are defined using matroids~\cite{DBLP:journals/jacm/FominLPS16-rep}. However, in this paper, we only use the technique in a simpler situation. So, we define representative sets in a simpler way without using the concept of matroids.

\begin{definition}[$q$-Representative Set]
Given a ground set $E$ and a $p$-family $\mathcal{S}$ of subsets of $E$.
A subfamily $\widehat{\mathcal{S}} \subseteq \mathcal{S}$ is \emph{$q$-representative} for $\mathcal{S}$ if the following holds:
for every set $Y \subseteq E$ of size at most $q$, if there is a set $X \in \mathcal{S}$ disjoint from $Y$,
then there is a set $\widehat{X} \in \widehat{\mathcal{S}}$ disjoint from $Y$.
We write $\widehat{\mathcal{S}} \subseteq^q_{\text{rep}} \mathcal{S}$ to denote that $\widehat{\mathcal{S}} \subseteq \mathcal{S}$ is $q$-representative for $\mathcal{S}$.
\label{def-rep-family}
\end{definition}

In the definition, we require that $\mathcal{S}$ to be a $p$-family. The requirement on the size seems not used in the definition. However, we specify the information because this is related to the running time and the size of the representative set. 

\cite{DBLP:journals/jacm/FominLPS16-rep} proposed the following lemmas to compute the $q$-representative sets efficiently. We will also use them.

\begin{lemma}[\cite{DBLP:journals/jacm/FominLPS16-rep}]
    Suppose that $\mathcal{X}$, $\mathcal{Y}$ and $\mathcal{Z}$ are p-families.
    If $\mathcal{X} \subseteq^q_{\text{rep}} \mathcal{Y}$ and $\mathcal{Y} \subseteq^q_{\text{rep}} \mathcal{Z}$, then $\mathcal{X} \subseteq^q_{\text{rep}} \mathcal{Z}$.
    \label{lem-rep-trans}
\end{lemma}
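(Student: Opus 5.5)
The statement is transitivity of $q$-representation, so I would prove it directly from Definition~\ref{def-rep-family} by chaining the two hypotheses. No matroid machinery is needed.

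First, I would confirm the subset condition. Since $\mathcal{X} \subseteq \mathcal{Y}$ and $\mathcal{Y} \subseteq \mathcal{Z}$, we have $\mathcal{X} \subseteq \mathcal{Z}$. All three are $p$-families over the same ground set $E$, so the $p$-family requirement in the definition holds automatically.

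Second, I would verify the disjointness condition. Fix any $Y \subseteq E$ with $|Y| \le q$, and suppose some $Z \in \mathcal{Z}$ satisfies $Z \cap Y = \emptyset$.
\begin{itemize}
\item Because $\mathcal{Y} \subseteq^q_{\text{rep}} \mathcal{Z}$, applying the definition to this same $Y$ gives some $\widehat{Y} \in \mathcal{Y}$ with $\widehat{Y} \cap Y = \emptyset$.
\item Because $\mathcal{X} \subseteq^q_{\text{rep}} \mathcal{Y}$, applying the definition again to the same $Y$ gives some $\widehat{X} \in \mathcal{X}$ with $\widehat{X} \cap Y = \emptyset$.
\end{itemize}
Since $Y$ was arbitrary, this is exactly $\mathcal{X} \subseteq^q_{\text{rep}} \mathcal{Z}$.

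There is no real obstacle. The only point needing care is that the same test set $Y$, with the same size bound $q$, is used in both applications, and this works because both hypotheses share the parameter $q$.
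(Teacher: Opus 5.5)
Your argument is correct and is the standard one: $\mathcal{X}\subseteq\mathcal{Y}\subseteq\mathcal{Z}$ gives the containment $\mathcal{X}\subseteq\mathcal{Z}$. Applying the two hypotheses in turn to the same test set $Y$ with $|Y|\le q$ then yields first a member of $\mathcal{Y}$ and then a member of $\mathcal{X}$ disjoint from $Y$. The paper gives no proof of its own and simply cites Fomin et al., where transitivity is proved by this same chaining argument (stated there for general matroids, with ``disjoint from $Y$'' replaced by ``fits $Y$'').
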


\begin{lemma}[\cite{DBLP:journals/jacm/FominLPS16-rep}]
     There is an algorithm that, given a $p$-family $\mathcal{S}$ of sets over a ground $E$ of size $n$, an integer $q$,
     computes in time
    \[
    \mathcal{O} \big( 
    |\mathcal{S}| \cdot (1-x)^{-q} \cdot 2^{o(p+q)} \cdot \log n \big
    )
    \]
    a subfamily $\widehat{\mathcal{S}} \subseteq \mathcal{S}$ such that $|\widehat{\mathcal{S}}| \leq x^{-p}(1-x)^{-q} \cdot 2^{o(p+q)}$ and $\widehat{\mathcal{S}} \subseteq_{rep}^q \mathcal{S} .$
    \label{lem-rep-family}
\end{lemma}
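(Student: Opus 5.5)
The plan is to reduce the computation of a $q$-representative subfamily to the construction of a suitable \emph{separating collection} over $E$. This is a family $\chi$ of subsets of $E$ with the following property: for every pair of disjoint sets $A,B\subseteq E$ with $|A|=p$ and $|B|\le q$, some $F\in\chi$ satisfies $A\subseteq F$ and $F\cap B=\emptyset$. I will aim for three quantitative properties of $\chi$:
\begin{itemize}
\item its size is $|\chi|\le x^{-p}(1-x)^{-q}2^{o(p+q)}\log n$;
\item every fixed $p$-set $A$ lies in at most $(1-x)^{-q}2^{o(p+q)}\log n$ members of $\chi$;
\item given $A$, all members containing $A$ can be listed in time linear in their number, up to $2^{o(p+q)}\log n$ overhead.
\end{itemize}
Given such a $\chi$, the algorithm is straightforward. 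Scan $\mathcal{S}$; for each $X\in\mathcal{S}$, enumerate every $F\in\chi$ with $X\subseteq F$. If $F$ has no stored witness yet, store $X$ as its witness. Output $\widehat{\mathcal{S}}$, the set of all stored witnesses.

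Correctness follows directly from Definition~\ref{def-rep-family}. Let $Y$ have $|Y|\le q$, and suppose some $X\in\mathcal{S}$ is disjoint from $Y$. The separating property gives an $F\in\chi$ with $X\subseteq F$ and $F\cap Y=\emptyset$. When $X$ was processed, $F$ received a witness $\widehat X$ (possibly $X$ itself), and $\widehat X\subseteq F$. Hence $\widehat X\cap Y=\emptyset$ and $\widehat X\in\widehat{\mathcal{S}}$. Each member of $\chi$ contributes at most one witness, so $|\widehat{\mathcal{S}}|\le|\chi|$; the $\log n$ factor can be absorbed as in the statement, or removed by a final pruning. The running time is $|\mathcal{S}|$ times the per-set listing cost, which gives exactly $|\mathcal{S}|(1-x)^{-q}2^{o(p+q)}\log n$.

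It remains to construct $\chi$. The randomized intuition is to sample $F$ by including each element independently with probability $x$. A fixed pair $(A,B)$ is then separated with probability $x^{p}(1-x)^{q}$. So about $x^{-p}(1-x)^{-q}\cdot(p+q)\log n$ samples separate all pairs with positive probability. Moreover, the samples containing a given $A$ number about $(1-x)^{-q}$ times polynomial factors, which is the source of the query bound.

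To derandomize, I would proceed in three steps.
\begin{enumerate}
\item Use an $(n,p+q,(p+q)^2)$-perfect hash family of size $2^{O(\log(p+q))}\log n$. This maps $E$ into a universe of size $(p+q)^2$ while keeping $A\cup B$ injective.
\item Split the small universe via a splitter into about $\log^2(p+q)$ blocks. Within each block, the restrictions of $A$ and $B$ have size $O((p+q)/\log^2(p+q))$.
\item Construct a separating collection for each block by brute force (a greedy set-cover argument). The block instances are small enough that this costs only $2^{o(p+q)}$. Finally, take the product of the block collections over all guesses of how $|A|$ and $|B|$ distribute among the blocks.
\end{enumerate}
The product of near-optimal block collections has size $\prod x^{-p_i}(1-x)^{-q_i}2^{o(\cdot)}=x^{-p}(1-x)^{-q}2^{o(p+q)}$. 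The number of distribution guesses is $2^{o(p+q)}$.

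The main obstacle is this derandomization, specifically keeping every overhead subexponential while also preserving the ``few sets containing $A$'' property with efficient enumeration. For the enumeration, given $A$, I would iterate over the hash functions and splitters. In each block I would list only those block sets containing the image of $A$'s restriction, precomputing these lists per block, which is affordable since blocks are small. Everything else reduces to routine bookkeeping of the $2^{o(p+q)}$ factors.
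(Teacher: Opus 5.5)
The paper gives no proof of this lemma; it quotes it from Fomin et al.~\cite{DBLP:journals/jacm/FominLPS16-rep}. Your architecture is the one used there: a separating collection with few members and few members through each $p$-set, one stored witness per member, and a construction from $(n,p+q,(p+q)^2)$-perfect hashing, a splitter, base collections on the blocks, and products over all distributions of $p$ and $q$. The gap is in the base step, which is exactly where the trade-off parameter $x$ must be enforced.

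\textbf{The greedy base step does not bound the multiplicity.} A greedy set cover bounds only the \emph{number} of sets. It says nothing about how many of them contain a fixed $p_i$-set, and nothing forces the chosen sets to have density $x$. An unrestricted greedy first picks a set maximizing $\binom{|F|}{p_i}\binom{N-|F|}{q_i}$, i.e.\ of density about $p_i/(p_i+q_i)$. Any cover built from such sets has average multiplicity at least about $((p_i+q_i)/q_i)^{q_i}$. Taking products over the blocks, each $X\in\mathcal{S}$ then lies in roughly $((p+q)/q)^q$ members. For the value $x=p/(p+2q)$ that the paper actually uses, this exceeds the required $(1-x)^{-q}$ by the factor $\bigl(\frac{2(p+q)}{p+2q}\bigr)^{q}$, which is exponential when $p=\Theta(q)$. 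So your listing bound, and with it the running time, is not established. You need a base construction at density $x$ that certifies both properties. One option is $x^{-p_i}(1-x)^{-q_i}\,\mathrm{poly}(p_i+q_i)\log N$ independent density-$x$ random sets: separation follows by a union bound and the multiplicity bound by Chernoff. This can then be derandomized by conditional expectations, with pessimistic estimators for both kinds of events.

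\textbf{The blocks are not small.} The splitter makes only $|A_i\cup B_i|$ small, not the block itself. A part of $[(p+q)^2]$ typically has about $(p+q)^2/\log^2(p+q)$ elements. A greedy that ranges over candidate subsets therefore costs $2^{\Theta((p+q)^2/\log^2(p+q))}$, not $2^{o(p+q)}$. An element-by-element derandomization avoids this: its cost is polynomial in $N$ and in the $2^{O((p+q)/\log(p+q))}$ constraint pairs.

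\textbf{The $\log n$ factor.} Your collection has size $x^{-p}(1-x)^{-q}2^{o(p+q)}\log n$. This $\log n$ cannot be ``absorbed'' into $2^{o(p+q)}$, since $n$ is unrelated to $p+q$. The theorem in~\cite{DBLP:journals/jacm/FominLPS16-rep} also carries this $\log n$ in its size bound, and it is harmless for the paper's $O^*$ analysis. For the statement exactly as written, however, you need an explicit removal argument, and ``a final pruning'' does not supply one.
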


\section{A Faster Deterministic Algorithm}
\label{sec-FPT}

In this section, we will introduce our algorithm. We are going to prove the following main result. 
\begin{theorem}
KEP can be deterministically solved in $O^*(6.855^{t})$ time.
    \label{thm-kep}
\end{theorem}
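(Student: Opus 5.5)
Before building any dynamic program, I would bound the solution size. Any feasible packing of length at least $t$ can be shrunk to one whose length lies in $[t, 2t-2]$ (more or less; this is presumably the paper's Lemma~\ref{lem-2t}). To do this, shorten paths by dropping their last vertex, and discard whole cycles. The difficult objects are cycles: a cycle of length $\ell\le l_c$ cannot be shortened. So if the current length is at least $t$, I would repeatedly delete components or trim path ends while the length stays at least $t$. The only case where I get stuck is when every remaining component is a cycle whose removal drops the length below $t$; then each cycle has length less than $t$ and the total is below $2t$. Hence it suffices to find a feasible packing whose length is between $t$ and $2t-1$. Such a packing covers at most $2t$ vertices, since a path of length $\ell$ has $\ell+1$ vertices but paths contribute at most $1$ extra vertex each. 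I would also handle the case of large $t$ relative to $n$ by brute force.

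Next I would set up the DP. The state is a target length $i\le 2t$, and the table entry is a family $\mathcal{P}_i$ of vertex sets $V(\mathcal{D})$ of feasible packings $\mathcal{D}$ with $c(\mathcal{D})=i$. To keep each family a $p$-family, I would also index by the number of vertices $p$. Rather than adding whole cycles at once, which would be too expensive, I would grow the packing one vertex at a time. The state then records a "partial last component": either a path under construction, with its current endpoint $v$ and its length so far, or a cycle under construction, with its start vertex $s$, its current endpoint $v$ and its length so far. Extending along an edge $(v,w)$ maps a family via $\uplus\{w\}$. A cycle is closed when $(v,s)\in A$ and its length is at most $l_c$. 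A path is started only from a vertex of $B$, and a cycle only from a vertex outside $B$; the second condition holds automatically because $B$ has no in-edges.

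After each step, I would replace every family by a $q$-representative subfamily using Lemma~\ref{lem-rep-family}, with $q=k-p$ where $k\le 2t$ is the total vertex budget. Correctness follows by the usual exchange argument. Fix an optimal solution, and split it at the current step into the part built so far, $X$, and the remainder, $Y$, of size at most $q$. Representativeness then gives some $\widehat X$ disjoint from $Y$ with the same DP state: same endpoint, same start vertex $s$, same lengths. So $\widehat X\cup Y$ is again feasible. Lemma~\ref{lem-rep-trans} lets me compose the truncations across steps. The key subtlety is that the state must carry exactly the information that $Y$'s feasibility depends on: the endpoint, the cycle's start vertex, and the current component length. Since all of these are polynomially many, the approach works.

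The running time is the main obstacle. With a total vertex budget of $2t$ and the standard choice of $x$ for each pair $p,q$, the cost is $\max_p \binom{2t}{p}$-type products. Naively this gives roughly $(2\cdot 2.851)^{t}$ or $4^{2t}$ behaviour, not $6.855^t$. I would therefore optimize $x$ separately for each $p$: the representative set size is $x^{-p}(1-x)^{-q}$, and the computation cost per step multiplies by a further $(1-x)^{-q}$. The bound is $\max_{0\le p\le 2t}\min_x x^{-p}(1-x)^{-2(2t-p)}$, and computing this numerically should yield $6.855^t$. If it does not, I would refine the argument by exploiting that the length $i$ and the vertex count differ by at most the number of paths, or that $q$ can be set to $t-i$ plus slack rather than $2t-p$. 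I expect this calibration to be the hard step.
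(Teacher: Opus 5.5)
Your overall route is the paper's: a DP that grows a semi-feasible packing one step at a time, with state (total length, vertex count $p$, endpoints and length of the last component), $(2t-p)$-representative families computed by Lemma~\ref{lem-rep-family}, and the usual exchange argument. There is, however, a genuine gap in your size-bounding step.

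\textbf{Long cycles.} The claim that in the stuck configuration ``each cycle has length less than $t$'' holds only if $l_c<t$, and you never establish or enforce this. Cycles cannot be trimmed, so when $l_c\ge t$ every solution may contain a cycle with far more than $2t$ vertices. For example, take $G$ a directed cycle on $3t$ vertices, $B=\emptyset$ and $l_c=3t$. This is a yes-instance, yet every feasible packing with at most $2t$ vertices is empty, so your DP with vertex budget $2t$ answers No. The missing ingredient is a separate test for a feasible cycle of length in $[t,l_c]$, which is a Long Directed Cycle computation. The paper does this first (Part~1 of its proof) with the algorithm of Fomin et al. That algorithm finds a shortest cycle of length at least $t$ in $O^*(6.75^{t+o(t)})$ time, which is within budget; if no feasible one exists, the paper sets $l_c:=\min(l_c,t-1)$, and similarly for $l_p$. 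This step cannot be absorbed into your DP, because in a digraph a shortest cycle of length at least $t$ may be arbitrarily long.

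\textbf{Vertex count.} Your justification that a packing of length in $[t,2t-1]$ has at most $2t$ vertices is wrong as stated: $2t-1$ disjoint single-edge paths have $4t-2$ vertices. What is true, and what the paper proves in Lemma~\ref{lem-semi-kep-rep} via a vertex-minimal choice, concerns your stuck packing. Whenever it still contains a path, its length is exactly $t$ (otherwise you could trim), so it has at most $t$ paths and at most $2t$ vertices. An all-cycle stuck packing has at most $2t-2$ vertices. Since your DP budget is measured in vertices, you need this corrected version.

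\textbf{Running time.} No further refinement is needed here. Your expression $\max_p\min_x x^{-p}(1-x)^{-2(2t-p)}$ is exactly the paper's, with minimizer $x=p/(p+2q)$. Writing $p=\alpha t$, it is maximized at $\alpha=2-2/\sqrt5$ with value $\left(\frac{3+\sqrt5}{2}\right)^{2t}=\phi^{4t}\approx 6.8541^t$, which is the $k$-Path bound with $k=2t$. The one detail to add is that the previous layer's family was reduced with a different $x$. The paper handles this with the polynomial bound $s_{p-1,q+1}\le e^2p\,s_{p,q}$.
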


In our main algorithm, we first handle the case when $l_p \ge t$ or $l_c \ge t$.
For this case, we check whether there is a feasible path of length $t$ or a feasible cycle of length at least $t$. If so, we find a solution covering at least $t$ recipients and the problem is solved.
If no, we can set that $l_p = \min(l_p, t-1)$ and $l_c = \min(l_c,t-1)$.
Finding feasible paths and cycles of length at least $t$
can be done by slightly modifying the known algorithms of $k$-Path and Long Directed Cycle in~\cite{DBLP:journals/jacm/FominLPS16-rep}.
To find a feasible $k$-path, we only need to modify the algorithm for $k$-Path by restricting starting points of paths in $B$ in the dynamic programming. For a feasible cycle of length at least $t$,
we can use their algorithm to find the smallest cycles of length at least $t$.
Thus, we can check whether there is a feasible path (resp., a feasible cycle) of length $\ge t$ in time $O(2.619^tn \log ^2n)$ (resp., $O(6.75^{t+o(t)}mn^2)$) according to~\cite{DBLP:journals/jacm/FominLPS16-rep}.

Next, we always assume that $l_p < t$ and $l_c <t$.
The algorithm will contain three major parts.
In Section~\ref{sec-semi},
we first introduce the concept of \textit{semi-feasible path-cycle packing},
which is an important concept used in the middle steps of our algorithm. In Section~\ref{sec-dp}, we introduce a dynamic programming algorithm to compute semi-feasible path-cycle packings.
This algorithm will solve KEP with $l_p < t$ and $l_c <t$ in $O^*(2^n)$ time.
After that, in Section~\ref{sec-rep}, we show how to apply representative sets to compress the number of  semi-feasible path-cycle packings in the dynamic programming algorithm. As a result, we will obtain an efficient algorithm for KEP.

\subsection{Semi-feasible path-cycle Packing}
\label{sec-semi}
Our idea is to use a dynamic programming algorithm to compute feasible path-cycle packings step by step. However, in some meddle steps, we need to store a partial of a feasible path or cycle. Thus, we come up with the concept of semi-feasible path-cycle packing.

A path-cycle packing $\mathcal{D}=\mathcal{P} \cup \mathcal{C} \cup \{D\}$ is \emph{semi-feasible} if it satisfies the following conditions:
\begin{itemize}
    \item $\mathcal{P}$ is a feasible path packing;
    \item $\mathcal{C}$ is a feasible cycle packing;
    \item $D$ is either a feasible path or a feasible cycle or a path of length at most $l_c$ such that $V(D) \cap B=\emptyset$.
\end{itemize}

We use the example in Figure~\ref{fig-semi} illustrate semi-feasible path-cycle packings.
Let $P_1 = abc$, $P_2 = fgh$, $P_3=igh$, $C_1 = ded$, and $C_2 = ghig$.
Since $l_p=l_c=3$, we know that $P_1$, $P_2$, $C_1$, and $C_2$ are all feasible.
Then the path-cycle packing $\mathcal{D}_1=\{P_1,C_1,P_2\}$ is both semi-feasible and feasible.
The path-cycle packing $\mathcal{D}_2=\{P_1\}\cup \{C_1\}\cup \{P_3\}$ is not a feasible packing since $P_3$ is not feasible.
However, $\mathcal{D}_2$ is a semi-feasible path-cycle packing since $P_3$ is of length $\le l_c$ and $V(P_3) \cap B=\emptyset$.

\begin{figure}[t]
    \centering
    \includegraphics[scale = 0.18]{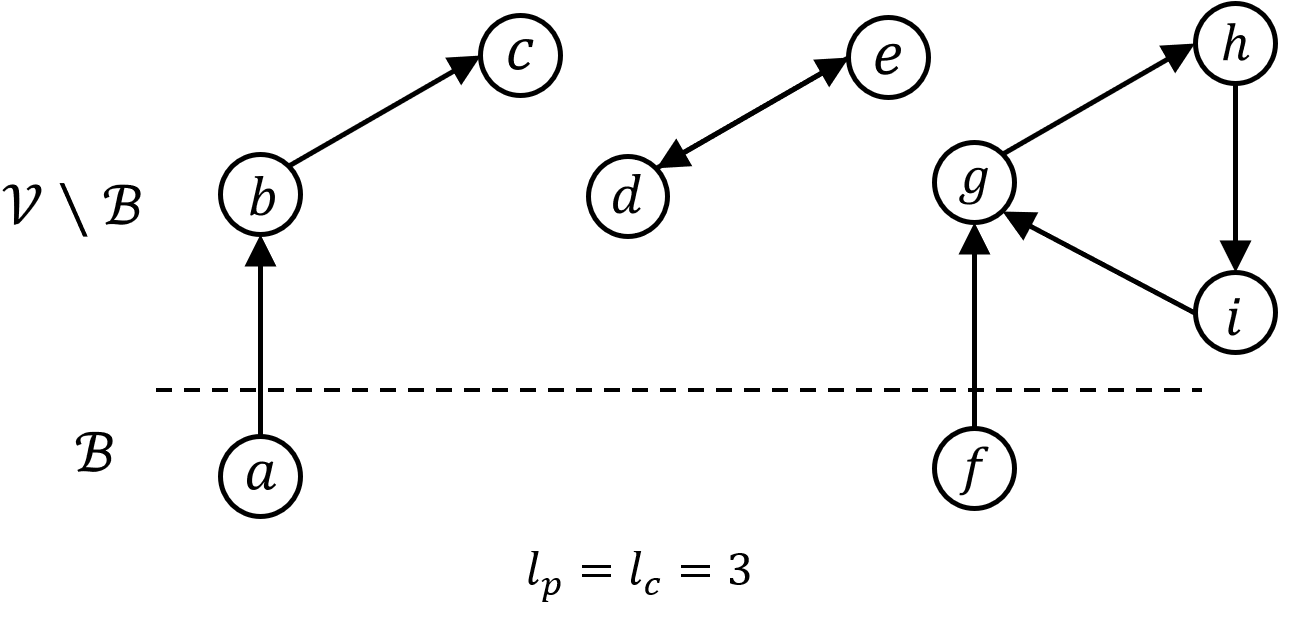}
    \centering
    \centering
        \caption{A compatibility graph}
    \label{fig-semi}
\end{figure}

\subsection{A Dynamic Programming Algorithm}
\label{sec-dp}

In this section, we are going to use a dynamic programming method
to solve KEP with $l_p < t$ and $l_c <t$ in $O^*(2^n)$ time. This running time bound is not new. Previously, \cite{DBLP:conf/ijcai/XiaoW18} showed that KEP can be solved in the same running time bound by using subset convolution. However, their approach cannot combine the representative set method to get the result in this paper. Next, we prove the following theorem.

\begin{theorem}
There is a dynamic programming algorithm that can deterministically solve KEP with $l_p < t$ and $l_c < t$ in $O^*(2^n)$ time.

    \label{thm-dp}
\end{theorem}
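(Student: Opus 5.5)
We design a DP over vertex subsets whose states are built so that the representative-set compression of Section~\ref{sec-rep} can later be applied. The DP grows a semi-feasible path-cycle packing one vertex at a time, always extending the current ``open'' component $D$.

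\textbf{States.} For a set $S\subseteq V$ we record the following: a tail vertex $v\in S$ (the last vertex of the open component $D$); the first vertex $u$ of $D$, needed to close a cycle; a type flag saying whether $D$ is an altruistic path (first vertex in $B$) or a cycle-candidate path (avoiding $B$); and the current length $\ell$ of $D$. The entry $T[S,u,v,\mathrm{type},\ell]$ is true iff there is a semi-feasible packing $\mathcal{P}\cup\mathcal{C}\cup\{D\}$ with vertex set exactly $S$ and $D$ as described. To bound the total length we also need $c(\mathcal{P}\cup\mathcal{C})$. Since paths have one fewer edge than vertices, this equals $|S|$ minus the number of completed paths. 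We can therefore add an index $k$ counting completed paths, or simply store the best achievable closed length for each state. There are $O(2^n n^{O(1)})$ states.

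\textbf{Transitions.} We process sets in increasing order of $|S|$. From a state we may:
\begin{itemize}
\item[(i)] Extend $D$ by an out-neighbour $w\notin S$ of $v$. This requires $w\notin B$, and the new length must satisfy $\ell+1\le l_p$ for the altruistic type or $\ell+1\le l_c$ for the cycle type.
\item[(ii)] Close $D$. If $D$ is an altruistic path we may declare it finished. If $D$ is a cycle candidate with $(v,u)\in A$, we may close it into a cycle of length $\ell+1\le l_c$. In both cases $D$ moves into $\mathcal{P}$ or $\mathcal{C}$.
\item[(iii)] After closing, start a new $D$ at any vertex $w\notin S$. If $w\in B$ the new $D$ has altruistic type; otherwise it has cycle type.
\end{itemize}
Closing a cycle-candidate path without an edge back to $u$ is simply disallowed. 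Such a path never becomes part of a final solution, so discarding it costs nothing.

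\textbf{Answer.} The algorithm answers yes iff some reachable packing has $c(\mathcal{D})\ge t$ after its last component is closed. We need not demand $c=t$ exactly: any packing with $c\ge t$ is a valid answer, and every feasible packing is produced by the DP.

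\textbf{Correctness.} We argue by two inductions. First, on the order in which a feasible packing's components are built, with vertices of each path listed in order and each cycle listed from a chosen start vertex. This shows every feasible packing corresponds to a sequence of transitions and hence is reached. Second, each transition preserves semi-feasibility, so every reached state is a semi-feasible packing. Thus every closed state is a feasible packing.

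\textbf{Running time.} There are $2^n$ subsets times polynomially many $(u,v,\mathrm{type},\ell)$ tuples, and each state has polynomially many transitions. This gives $O^*(2^n)$.

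\textbf{Main obstacle.} The delicate point is the length bookkeeping: $c(\mathcal{D})$ is not determined by $|S|$ alone, because of the path/cycle distinction. A further concern is that a solution may use up to $2t-2$ vertices. The hypothesis $l_p,l_c<t$ means we never need to detect a single component of length $\ge t$, since that case was handled earlier. It also bounds $\ell$, which keeps the state space polynomial per subset. I would handle the bookkeeping by including the number of completed paths in the state, so that $c$ is computable exactly from the state.
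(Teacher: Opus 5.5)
Your proposal is correct and is essentially the paper's DP. Both grow a semi-feasible packing by extending the open component $D$, closing it into a cycle, or finishing it and opening a new one, and both get $O^*(2^n)$ from the number of vertex subsets; the paper stores families $\mathcal{F}_{k,l}^{vu}$ and counts stored sets, which amounts to your Boolean table over $S$.

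The only real difference is bookkeeping. The paper indexes by total length $k$ and caps it at $2t$ via Lemma~\ref{lem-2t} and Lemma~\ref{lem-semi-kep}. You track the number $j$ of completed paths and test $c\ge t$ directly on closed states; note that while $D$ is open, $c(\mathcal{P}\cup\mathcal{C})=|S\setminus V(D)|-j$, not $|S|-j$. As a result your version never uses $l_p,l_c<t$, and since $\ell\le n-1$ anyway, that hypothesis is not what keeps your state space polynomial.
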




We use KEPalg to denote the dynamic programming algorithm in Theorem~\ref{thm-dp}.
We also recall the assumption that $l_p < t$ and $l_c <t$.
Under this assumption, we have the following property to bound the length of a  feasible path-cycle packing.

\begin{lemma}
    Assume that $l_p < t$ and $l_c < t$. A KEP instance $(G,B,l_p,l_c,t)$ is a yes-instance if and only if there exists a feasible path-cycle packing $\mathcal{D}$ such that $t \le c(\mathcal{D}) \le 2t$.
     \label{lem-2t}
 \end{lemma}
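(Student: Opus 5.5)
The reverse direction is immediate: a feasible packing $\mathcal{D}$ with $t \le c(\mathcal{D}) \le 2t$ is in particular a feasible packing of length at least $t$, so the instance is a yes-instance. All the work is in the forward direction, where I will start from an arbitrary solution and shrink it.

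Assume $(G,B,l_p,l_c,t)$ is a yes-instance. Among all feasible path-cycle packings $\mathcal{D}$ with $c(\mathcal{D}) \ge t$, I choose one that minimizes $c(\mathcal{D})$. Since $c(\mathcal{D}) \ge t$ already holds, it remains to show $c(\mathcal{D}) \le 2t$. Suppose for contradiction that $c(\mathcal{D}) > 2t$. I will find a smaller feasible packing whose length is still at least $t$.

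The key step is deleting whole components. If $\mathcal{D}$ contains a component $D$ with $c(D)=0$, I remove it: the packing stays feasible, its length is unchanged, and I may assume no such components exist. Now pick any path or cycle $D \in \mathcal{D}$. Its length satisfies $c(D) \le \max(l_p,l_c) \le t-1$, using the assumption $l_p<t$ and $l_c<t$. Consider $\mathcal{D}' = \mathcal{D}\setminus\{D\}$. It is still feasible, since every remaining component is feasible and the components remain vertex-disjoint. Its length is
\[
c(\mathcal{D}') = c(\mathcal{D}) - c(D) > 2t - (t-1) = t+1 \ge t .
\]
Because $c(D)\ge 1$, we also have $c(\mathcal{D}') < c(\mathcal{D})$, which contradicts the minimality of $c(\mathcal{D})$. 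Hence $c(\mathcal{D}) \le 2t$.

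The argument even gives the slightly stronger bound $c(\mathcal{D}) \le 2t-2$, but the statement only needs $2t$. There is no real obstacle here. The only points to check are that removing a component keeps the packing feasible, which is trivial, and that each component has length below $t$; the latter is exactly what the preprocessing assumption $l_p,l_c<t$ supplies.
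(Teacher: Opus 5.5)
Your proposal is correct and follows the paper's proof essentially verbatim: take a feasible packing of minimum length among those with length at least $t$, and if its length exceeds $2t$, delete a component (of length at most $t-1$) to obtain a strictly shorter feasible packing whose length is still at least $t$. Your explicit handling of zero-length components, which is needed for the strict decrease, is a small point the paper leaves implicit.
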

\begin{proof}
    If there is a feasible path-cycle packing $\mathcal{D}$ such that $t \le c(\mathcal{D}) \le 2t$, 
    then the instance is a yes-instance by the definition of the problem.
    
    Now we prove the other direction.
    If $(G,B,l_p,l_c,t)$ is a yes-instance, then there exists a feasible cycle-path packing $\mathcal{D}$ such that $c(\mathcal{D}) \ge t$. We assume that $\mathcal{D}$ is the packing with the minimum $c(\mathcal{D})$ satisfying the above requirement.
    We are done if it also holds that $c(\mathcal{D}) \le 2t$.
    Otherwise, we have that $c(D) > 2t$. Since $l_c < t$ and $l_p < t$, it holds that $c(D) < t$ for any path or cycle $D \in \mathcal{D}$. 
    Let $\mathcal{D}'=\mathcal{D} \setminus D$. since $c(\mathcal{D}') = c(\mathcal{D}) - c(D) \ge t$, we obtain a new feasible path-cycle packing $\mathcal{D}'$ such that $t \le c(\mathcal{D'}) < c(\mathcal{D})$, which is a contradiction to the minimality of $\mathcal{D}$. So, we know that it holds $t \le c(\mathcal{D}) \le 2t$.
\end{proof}

This lemma says that there exists a path-cycle packing of length at most $2t$ if the input is a yes-instance. In the process of our dynamic programming, we will compute the vertex subsets of all possible semi-feasible packings with length at most $2t$.
Finally, we answer the instance by checking whether there is a feasible path-cycle packing of length at least $t$ among all semi-feasible packings we compute.

\begin{table}[t]
    \small
    \centering
    \begin{tabular}{c|c|c|c|c}
        \hline
        \multirow{3}{*}{} & \multicolumn{3}{c|}{\textbf{Semi-feasible Packing}}  & \textbf{Vertex Subset}\\
        & \multicolumn{3}{c|}{$\mathcal{D} = \mathcal{P} \cup \mathcal{C} \cup \{D\}$} & \multirow{2}{*}{$V(\mathcal{D})$} \\
        \cline{2-4}
        & $\mathcal{P}$ & $\mathcal{C}$ & $D$ \\
        \hline
        $V(\mathcal{D}) \in \mathcal{F}^{ab}_{1,1}$ & $\emptyset$ & $\emptyset$ & $ab$ & $\{a, b\} $ \\ 
        \hline
        $ V(\mathcal{D})\in \mathcal{F}^{ac}_{2,2}$ & $\emptyset$ & $\emptyset$ & $abc$ & $\{a, b, c\}$ \\ 
        \hline
        $V(\mathcal{D}) \in \mathcal{F}^{de}_{3,1}$ & $\{abc\}$ & $\emptyset$ & $de$ & \multirow{2}{*}{$\{a, b, c, d, e\}$} \\ 
        \cline{1-4}
        $V(\mathcal{D}) \in \mathcal{F}^{ed}_{3,1}$ & $\{abc\}$ & $\emptyset$ & $ed$ &  \\ 
        \hline
        \multirow{2}{*}{$V(\mathcal{D}) \in \mathcal{F}^{fg}_{3,1}$} & $\{abc\}$ & $\emptyset$ & $fg$ & $\{a, b, c, f, g\}$ \\ 
        \cline{2-5}
         & $\emptyset$ & $\{ded\}$ & $fg$ & $\{d, e, f, g\}$ \\ 
        \hline
        $V(\mathcal{D}) \in \mathcal{F}^{dd}_{4,2}$ & $\{abc\}$ & $\emptyset$ & $ded$ & $\{a, b, c, d, e\}$ \\ 
        \hline
        \textbf{$V(\mathcal{D}) \in \mathcal{F}^{fg}_{5,1}$} & $\{abc\}$ & $\{ded\}$ & $fg$ & $\{a, b, c, d, e, f, g\}$ \\ 
        \hline
    \end{tabular}

     \caption{Some illustrations of $\mathcal{F}_{k,l}^{vu}$ for the instance in Figure~\ref{fig-semi}.
        Column 1 shows $\mathcal{F}_{k,l}^{vu}$ with different values of $k,l,v$ and $u$; 
        Columns 2,3 and 4 display the semi-feasible packing $\mathcal{D} = \mathcal{P} \cup \mathcal{C} \cup \{D\}$;
        Column 5 displays the corresponding vertex subset $V(\mathcal{D})$.
    }\label{tab-semi}
\end{table}

We can see that a semi-feasible path-cycle packing $\mathcal{D} = \mathcal{P} \cup \mathcal{C} \cup \{D\}$ is feasible if and only if $D$ is a feasible path or cycle.
To obtain the ultimate path-cycle packing,  we will extend the packing by either increasing the length of the path $D$ by 1 or moving $D$ into  $\mathcal{P}\cup \mathcal{C}$ and setting a new edge as $D$. This operation varies depending on the element $D$. If $D$ is a path (but not a feasible path), we will add one edge to it such that it will either become a feasible cycle or a longer path. If $D$ is a feasible cycle, we move $D$ to $\mathcal{C}$ and take a new edge (disjoint with any element in $\mathcal{D}$) as $D$. If $D$ is a feasible path, we can either increase the length of $D$ by adding an edge at its end or move $D$ to $\mathcal{P}$ and take a new edge (disjoint with any element in $\mathcal{D}$) as $D$.


For a semi-feasible packing $\mathcal{D} = \mathcal{P} \cup \mathcal{C} \cup D$, let $V(\mathcal{D}) = V(\mathcal{P}) \cup V(\mathcal{C}) \cup V(D)$.
Let $D_l^{vu}$ denote a path from $v$ to $u$ of length $l$ if $v\neq u$ and a cycle containing $v$ of length $l$ if $u=v$.
We define 
$$\mathcal{F}_{k,l}^{vu} = \{V(\mathcal{D}) \mid  \mathcal{D} = \mathcal{P} \cup \mathcal{C} \cup \{D_l^{vu}\} \wedge c(\mathcal{D})=k\}.$$
Table~\ref{tab-semi} provides some illustrations of $\mathcal{F}_{k,l}^{vu}$ for the example in Figure 1.
Our dynamic programming algorithm indeed stores the vertex sets of all possible semi-feasible path-cycle packings by computing $\mathcal{F}_{k,l}^{vu}$ for all $1 \le k \le 2t$, $v \in V$, $u \in V\setminus B$ and $1 \le l \le \max(l_p,l_c)$.
The following Lemma~\ref{lem-semi-kep} says that we can solve KEP by checking $\mathcal{F}_{k,l}^{vu}$.

\begin{lemma}
    A KEP instance $(G,B,l_p,l_c,t)$ with  $l_p,l_c < t$ is a yes-instance if and only if there exists a tuple ($k$, $v$, $u$, $l$) such that $t \le k \le 2t$, $\mathcal{F}_{k,l}^{vu} \neq \emptyset $, and one of the following two conditions holds:
    \begin{itemize}
        \item  $v \in B$, $u \in V \setminus B$ and $l \le l_p$;
        \item  $v=u \in V \setminus B$ and $l \le l_c$.
    \end{itemize}
    \label{lem-semi-kep}
\end{lemma}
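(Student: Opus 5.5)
We prove the two directions separately, using Lemma~\ref{lem-2t} to reduce the yes-instance condition to the existence of a feasible packing of length between $t$ and $2t$, and then matching such packings with nonempty families $\mathcal{F}_{k,l}^{vu}$ whose last element $D_l^{vu}$ is itself feasible.

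\emph{($\Leftarrow$)} Suppose some tuple $(k,v,u,l)$ satisfies the conditions. Since $\mathcal{F}_{k,l}^{vu}\neq\emptyset$, there is a semi-feasible packing $\mathcal{D}=\mathcal{P}\cup\mathcal{C}\cup\{D_l^{vu}\}$ with $c(\mathcal{D})=k\ge t$. In the first case ($v\in B$, $u\notin B$, $l\le l_p$), $D_l^{vu}$ is a path of length $l\le l_p$ starting at an altruistic vertex, hence a feasible path. In the second case ($v=u\notin B$, $l\le l_c$), $D_l^{vu}$ is a cycle of length at most $l_c$. We still need that it avoids $B$; this follows because no edge enters a vertex of $B$, so no cycle can contain a vertex of $B$. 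In both cases $D$ is feasible, and $\mathcal{P},\mathcal{C}$ are feasible by the definition of semi-feasibility, so $\mathcal{D}$ is a feasible packing of length $k\ge t$. Hence the instance is a yes-instance.

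\emph{($\Rightarrow$)} By Lemma~\ref{lem-2t}, there is a feasible packing $\mathcal{D}$ with $t\le c(\mathcal{D})\le 2t$. Since $c(\mathcal{D})\ge t\ge 1$, $\mathcal{D}$ contains an element of positive length. We may discard any length-$0$ paths (single altruistic vertices) without changing $c(\mathcal{D})$, so we pick some $D\in\mathcal{D}$ of positive length $l$. Let $\mathcal{P}$ and $\mathcal{C}$ be the remaining paths and cycles, which are feasible, so $\mathcal{D}=\mathcal{P}\cup\mathcal{C}\cup\{D\}$ is semi-feasible. We then split on the type of $D$.
\begin{itemize}
\item If $D$ is a path from $v$ to $u$, then $v\in B$, $l\le l_p$, and $u\ne v$. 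Moreover $u\notin B$, because $u$ has an incoming edge and no edge enters $B$. So the first condition holds with $D=D_l^{vu}$.
\item If $D$ is a cycle, let $v=u$ be any of its vertices. Then $v\notin B$ and $l\le l_c$, so the second condition holds.
\end{itemize}
In either case $V(\mathcal{D})\in\mathcal{F}_{k,l}^{vu}$ with $k=c(\mathcal{D})\in[t,2t]$. Finally, $l\le\max(l_p,l_c)$ and $k\le 2t$, so the tuple lies in the index range used by the algorithm.

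The main obstacle is bookkeeping rather than substance. We must handle degenerate length-$0$ paths, confirm $u\notin B$ and that cycles avoid $B$ (both from the source property of $B$), and check that the indexed ranges of $k$ and $l$ cover the witness tuple.
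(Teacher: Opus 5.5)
Your proof is correct and follows the paper's argument: Lemma~\ref{lem-2t} plus the observation that a feasible packing is semi-feasible for the forward direction, and reading off the feasibility of $D_l^{vu}$ from the index conditions for the backward direction. You are more careful than the paper, which leaves implicit four points you make explicit: discarding length-$0$ paths, choosing a designated element $D$ of positive length, showing $u\notin B$, and noting that cycles cannot contain vertices of $B$.
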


\begin{proof}
    We first prove the forward direction. By Lemma~\ref{lem-2t}, we know that if $(G,B,l_p,l_c,t)$ is a yes-instance, there must be a feasible path-cycle packing $\mathcal{D}$ such that $t \le c(\mathcal{D}) \le 2t$. Since feasible path-cycle packing is also semi-feasible, we know that there exists $\mathcal{F}_{k,l}^{vu}$ satisfying the above conditions that contains $V(\mathcal{D})$.
    
    We consider the other direction.  Now we have $\mathcal{D}=(\mathcal{P},\mathcal{C},D_{l}^{vu})$ with $c(\mathcal{D}) \ge t$. Since it satisfies the above conditions, we know that $D_{l}^{vu}$ is actually a feasible path or cycle. Thus $\mathcal{D}$ is a feasible path-cycle packing with $c(\mathcal{D}) \ge t$, which means $(G,B,l_p,l_c,t)$ is a yes-instance.
\end{proof}

We have the following Lemma~\ref{lem-semi-alg} to compute $\mathcal{F}_{k,l}^{vu}$.

\begin{lemma}
    There exists a dynamic programming algorithm that can compute $\mathcal{F}_{k,l}^{vu}$ for all $1 \le k \le 2t$, $v \in V$, $u \in V\setminus B$ and $l \le \max(l_c,l_p)$ in time $O^*(2^n)$.
    \label{lem-semi-alg}
\end{lemma}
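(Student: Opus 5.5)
We compute the families $\mathcal{F}_{k,l}^{vu}$ in increasing order of $k$, treating each family as a set of vertex subsets of $V$, so that each family has at most $2^n$ members. The recurrences follow the extension operations described before the definition, with one convention: a state with $u=v$ stores a completed cycle $D$, and a state with $u\neq v$ stores a path $D$ from $v$ to $u$.

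\textbf{Base case.} For $k=l=1$, the family $\mathcal{F}_{1,1}^{vu}$ contains $\{v,u\}$ whenever $(v,u)\in A$ and the length-one path $vu$ is admissible as $D$. That is, either $v\in B$ (so $vu$ is a feasible path, since $l_p\ge 1$), or $v\notin B$ and $l_c\ge 1$.

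\textbf{Recurrence, Case 1 (extending $D$).} Let $l\ge 2$ and $u\neq v$. For each $w$ with $(w,u)\in A$, we add every $S\cup\{u\}$ with $S\in\mathcal{F}_{k-1,l-1}^{vw}$, $w\neq v$ (or $w$ arbitrary when $l-1$ refers to a path state) and $u\notin S$. We impose the length bound $l\le l_p$ if $v\in B$ and $l\le l_c$ if $v\notin B$.

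\textbf{Recurrence, Case 2 (closing a cycle).} For $u=v\notin B$ and $2\le l\le l_c$, we add every $S\in\mathcal{F}_{k-1,l-1}^{vw}$ with $w\neq v$ and $(w,v)\in A$. Here $S$ is unchanged, since the new edge $wv$ introduces no new vertex.

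\textbf{Recurrence, Case 3 (starting a new edge).} For $l=1$, we take $\mathcal{F}_{k-1,l'}^{v'u'}$ whose stored $D$ is feasible, i.e. either $v'\in B$ and $l'\le l_p$, or $v'=u'\notin B$ and $l'\le l_c$. For an edge $(v,u)$ admissible as in the base case, we add $S\cup\{v,u\}$ for every such $S$ with $\{v,u\}\cap S=\emptyset$.

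Note that a non-feasible path $D$, i.e. one with $v\notin B$ and $u\neq v$, may not be frozen by Case 3. This is exactly what forces every $D$ eventually placed in $\mathcal{P}$ or $\mathcal{C}$ to be feasible.

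\textbf{Correctness.} We prove by induction on $k$ that $S\in\mathcal{F}_{k,l}^{vu}$ holds exactly when $S=V(\mathcal{D})$ for some semi-feasible $\mathcal{D}=\mathcal{P}\cup\mathcal{C}\cup\{D_l^{vu}\}$ with $c(\mathcal{D})=k$.

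For soundness, each rule preserves vertex-disjointness, because of the disjointness checks. It also preserves the length and feasibility constraints, and it adds exactly one to $c(\mathcal{D})$.

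For completeness, we decompose a semi-feasible $\mathcal{D}$ with $c(\mathcal{D})=k\ge 2$ according to its last element.
\begin{itemize}
\item If $D$ is a cycle $v\cdots wv$, delete the edge $wv$ to get the path $D_{l-1}^{vw}$ (Case 2).
\item If $D$ is a path of length at least $2$, delete its last vertex $u$ (Case 1).
\item If $D$ is a single edge, move any element of $\mathcal{P}\cup\mathcal{C}$ into the role of $D$ (Case 3). This element is feasible by definition.
\end{itemize}
In each case the result is a semi-feasible packing of length $k-1$, so the induction hypothesis applies.

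\textbf{Main obstacle.} The subtle point is that the sets record only vertices, so the predecessor state must determine enough of the structure that the disjointness check $u\notin S$ is valid. We need $u\notin V(\mathcal{D}')$, not merely $u\notin V(D')$, and this holds because $S$ is the full vertex set of the packing. We also need to check that a cycle closing back to $v$ is never confused with revisiting a vertex; the special treatment $u=v$ with $S$ unchanged handles this.

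\textbf{Running time.} There are $O(t\cdot n^2\cdot t)$ states. Each state stores at most $2^n$ sets, and each transition processes a family over $O(n)$ choices of predecessor with polynomial-time set operations. This gives $O^*(2^n)$ in total.
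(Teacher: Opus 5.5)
Your proposal is correct and matches the paper's proof: your base case and Cases 1, 2, 3 are the paper's Cases (i), (iii), (iv), (ii), with the same rule that only a feasible distinguished element may be moved into $\mathcal{P}\cup\mathcal{C}$, and the same completeness argument (remove the last vertex, the closing edge, or the single edge). Unlike the paper, you also prove soundness and bound the work per state instead of summing $\binom{n}{k}$. Two harmless slips: Case 3 has $O(n^2 t)$ predecessor families rather than $O(n)$, and $l_p \ge 1$ is not given, so it should be a condition in the base case; neither affects correctness or the $O^*(2^n)$ bound.
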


\begin{proof}
Recall that our algorithm is a dynamic programming algorithm. We show how to compute $\mathcal{F}_{k,l}^{vu}$ for different values of $k,l,v$ and $u$ by either setting an initial value or giving a transition equation. 

We distinguish five cases. In fact, if it is not of the first four cases, then there is no packing satisfying the condition of $\mathcal{F}_{k,l}^{vu}$ and we will simply let $\mathcal{F}_{k,l}^{vu}=\emptyset$. This is Case (v).
    \paragraph{Case (i):}$k=l=1$, and $(v,u) \in A(G)$.
    
    If $\mathcal{D} = \{\mathcal{P},\mathcal{C},D_{1}^{vu}\}$ is a semi-feasible path-cycle packing in $\mathcal{F}_{1,1}^{vu}$, we know that $\mathcal{P}=\mathcal{C}=\emptyset$. Notice that  $v\neq u$ since there are no self-loops in $G$ and $(v,u) \in G$. The only possible set in $\mathcal{F}_{1,1}^{vu}$ is $\{v,u\}$. Therefore, we let $\mathcal{F}_{1,1}^{vu} =\{\{v,u\}\}$.
    
    \paragraph{Case (ii):}$k>l=1$, and $(v,u) \in A(G)$.
    
    In this case, for any semi-feasible packing $\mathcal{D} = \{\mathcal{P},\mathcal{C},D_1^{vu}\}$ such that $V(\mathcal{D}) \in \mathcal{F}_{k,1}^{vu}$, it holds that $\mathcal{P} \cup \mathcal{C}\neq \emptyset$ since $k>l$. We assume that $D_{l'}^{v'u'}$ is an arbitrary feasible path or cycle in $\mathcal{P} \cup \mathcal{C}$ (let $v'=u'$ if $D_{l'}^{v'u'}$ is a cycle, otherwise $v' \neq u'$). Then the vertex subset of $\mathcal{D}' = \{\mathcal{P} \setminus \{D_{l'}^{v'u'}\}, \mathcal{C} \setminus  \{D_{l'}^{v'u'}\},D_{l'}^{v'u'}\}$ must be contained in $\mathcal{F}_{k-1,l'}^{v'u'}$. Notice that $v \neq u$ since $l=1$. Thus, we can compute $\mathcal{F}_{k,1}^{vu}$ by enumerating all possible $v'$, $u'$ and $l'$:
    \begin{equation}
        \begin{aligned}
        \mathcal{F}_{k,1}^{vu} = \bigcup_{v' \in B,u' \notin B, \ 1 \le l' \le l_p} \mathcal{F}_{k-1,l'}^{v'u'} \uplus \{v,u\} \ \ \ 
        \cup \\ \bigcup_{v'\notin B, \ 1 \le l' \le l_c} \mathcal{F}_{k-1,l'}^{v'v'} \uplus \{v,u\}.
        \end{aligned}
    \end{equation}
    
    \paragraph{Case (iii):}$k\geq l>1$, $v \neq u$, $l \le l_p$ if $v \in B$, and $l \le l_c$ if $v \notin B$.
   
    In this case, for any semi-feasible packing $\mathcal{D} = \{\mathcal{P},\mathcal{C},D_l^{vu}\}$ such that $V(\mathcal{D}) \in \mathcal{F}_{k,l}^{vu}$, we have that $\mathcal{D}_{l}^{vu}$ is a path of length at least 2 if $v\neq u$ or a cycle if $v=u$. Let $w$ be the predecessor vertex of $u$ and $D_{l-1}^{vw}$ be the path obtained by removing vertex $u$ from $D_l^{vu}$. Then the vertex subset of the semi-feasible packing $\mathcal{D'} = \{\mathcal{P},\mathcal{C},D_{l-1}^{vw}\}$ must be contained in $\mathcal{F}_{k-1,l-1}^{vw}$.
    Thus, we can compute $\mathcal{F}_{k,l}^{vu}$ by enumerating all possible predecessor vertices $w$:
    \begin{equation}
        \begin{aligned}
        \mathcal{F}_{k,l}^{vu} = \bigcup_{w \neq v, \ (w,u) \in A(G)} \mathcal{F}_{k-1,l-1}^{vw} \uplus \{u\}
        \end{aligned}.
    \end{equation}
   
    \paragraph{Case (iv):} $k\geq l>1$, $v = u \notin  B$, and $l \le l_c$.
    
    This case is similar to Case (iii) and we also handle it by enumerating all possible predecessors of vertices $v$:
    \begin{equation}
        \begin{aligned}
        \mathcal{F}_{k,l}^{vv} = \bigcup_{w \neq v, \ (w,v) \in A(G)} \mathcal{F}_{k-1,l-1}^{vw}
        \end{aligned}.
    \end{equation}
    
    \paragraph{Case (v):} None of the above four cases holds. It is easy to verify that there is no corresponding packing satisfying the condition. We simply let $\mathcal{F}_{k,l}^{vu}=\emptyset$.
  
Next, we analyze the running time bound of the algorithm.
Note that our running time bound is not directly obtained  by evaluating the number of $\mathcal{F}_{k,l}^{vu}$ and the time to compute each $\mathcal{F}_{k,l}^{vu}$.
We need a refined analysis.
Note that each vertex subset in some $\mathcal{F}_{k,l}^{vu}$ can be computed in polynomial time. We only need to evaluate how many vertex subsets are contained in $\mathcal{F}_{k,l}^{vu}$.
    Let $num(k,l,v,u)$ denote the number of vertex subsets in $\mathcal{F}_{k,l}^{vu}$ for values of $k,l,v$ and $u$. We have that
    \[
    \begin{aligned}
        &\sum_{k=1}^{\min(2t,n)}\sum_{l=1}^{\max(l_p,l_c) } \sum_{v} \sum_{u} num(k,l,v,u)\\
        &\le \sum_{k=1}^{\min(2t,n)}\sum_{l=1}^{\max(l_p,l_c) } \sum_{v} \sum_{u} \binom{n}{k} \\
        &\le n^3 \sum_{k=1}^{n} {n \choose k}       = O^*(2^n).
    \end{aligned}
    \]
    The number of vertex subsets is bounded by $O^*(2^n)$ and the running time of the algorithm is bounded by $O^*(2^n)$.
\end{proof}

Now we are ready to prove Theorem~\ref{thm-dp}.
\begin{proof}{(of Theorem~\ref{thm-dp})}.
    We prove Theorem~\ref{thm-dp} by showing that KEPalg can determine where there exists a feasible path-cycle packing of length at least $t$ in time $O^*(2^n)$.
    
    There are two mains parts in KEPalg.
    In the first part, we compute $\mathcal{F}_{k,l}^{vu}$ for all $1 \le k \le 2t$, $1 \le l \le \max(l_p,l_c)$, $v \in V$ and $u \in V \setminus B$.
    By Lemma~\ref{lem-semi-alg}, we know that the computation can be done in time $O^*(2^n)$.
    In the second part, we answer the instance by checking $\mathcal{F}_{k,l}^{vu}$.
    If there exists a tuple $(k,l,v,u)$ such that $\mathcal{F}_{k,l}^{vu}$ is nonempty and either the two conditions in Lemma~\ref{lem-semi-kep} holds, the answer of the instance is yes. The checking can be done without using more time than part 1.
    
    The algorithm only answer the decision version of the problem. It can also output the corresponding feasible path-cycle packing by adding an additional data structure to store the corresponding path-cycle packings at each state. 
\end{proof}
    



\subsection{Representative Sets for Kidney Exchange}
\label{sec-rep}

In this Section, our goal is to obtain a fast algorithm based on KEPalg in Theorem~\ref{thm-dp}.
The fast algorithm in this subsection is denoted by KEPalg-rep.

Observe that the exponential factor of the time complexity of KEPalg comes from the cardinality of $\mathcal{F}_{k,l}^{vu}$. Since each collection $\mathcal{F}_{k,l}^{vu}$ might contain at most $\binom{n}{k}$ vertex subsets, the time complexity would roughly be
\[\sum_{k=1}^{\min(2t,n)}\binom{n}{k} \le \sum_{k=1}^{n} \binom{n}{k} = O(2^n).\]

Notice that we already bound the length of feasible path-cycle packings of KEP in Lemma~\ref{lem-2t}. However, the cardinality of $\mathcal{F}_{k,l}^{vu}$ can only be bounded by $\binom{n}{k}$, which is not tight for our analysis above.
This motivates us that we can improve the time complexity by compressing $\mathcal{F}_{k,l}^{vu}$, which might give us a time complexity bounded by $t$. For example, if we can compress $\mathcal{F}_{k,l}^{vu}$ to make it contain at most $\binom{2t}{k}$ subsets, then the time complexity would be reduced to 
$\sum_{k=1}^{2t}\binom{2t}{k} \ = O(4^t).$

To compress $\mathcal{F}_{k,l}^{vu}$, our main technique is the efficient computation of representative sets in~\cite{DBLP:journals/jacm/FominLPS16-rep}.
We first slightly modify the definition of states in our dynamic programming.
Recall that $D_l^{vu}$ denote a path from $v$ to $u$ of length $l$ if $v\neq u$ and a cycle containing $v$ of length $l$ if $u=v$.
At each state, We define a $p$-family $\mathcal{F}_{k,l,p}^{vu} = \{ V(\mathcal{D}) \mid \mathcal{D}=\mathcal{P} \cup \mathcal{C} \cup D_l^{vu} \wedge c(\mathcal{D}) = k \wedge |V(\mathcal{D})|=p\}$.
Note that we can compute $\mathcal{F}_{k,l,p}^{vu}$ for $1 \le k,p \le 2t$ similar to the computation of $\mathcal{F}_{k,l}^{vu}$ in Lemma~\ref{lem-semi-alg}.

Let $q=2t-p$ and $\hat{\mathcal{F}}_{k,l,p}^{vu}$ denote a $q$-representative set of $\mathcal{F}_{k,l,p}^{vu}$.
In our refined algorithm KEPalg-rep, we want to compute $\hat{\mathcal{F}}_{k,l,p}^{vu}$ for $1 \le k,p \le 2t$.
For the convenience, we use another $p$-family $\mathcal{N}_{k,l,p}^{vu}$ as an intermediate variable to compute $\hat{\mathcal{F}}_{k,l,p}^{vu}$.
To compute $\hat{\mathcal{F}}_{k,l,p}^{vu}$, we first compute $\mathcal{N}_{k,l,p}^{vu}$ by following the equation similar to Lemma~\ref{lem-semi-alg} and then compute $\hat{\mathcal{F}}_{k,l,p}^{vu} \subseteq_{rep}^{2t-p} \mathcal{N}_{k,l,p}^{vu}$ by Lemma~\ref{lem-rep-family}.

We recall the five cases in the proof of Lemma~\ref{lem-semi-alg} and give the initialization and transition equation of $\mathcal{N}_{k,l,p}^{vu}$ and $\hat{\mathcal{F}}_{k,l,p}^{vu}$ directly as follows:

\begin{equation}
        \hat{\mathcal{F}}_{k,l,p}^{vu} \subseteq_{rep}^{2t-p} \mathcal{N}_{k,l,p}^{vu} \\
        \label{eq-F-N};
\end{equation}

\begin{equation}
    \begin{aligned}
        &\mathcal{N}_{k,l,p}^{vu} = \\
        &\begin{cases}
              \{\{v,u\}\} & \text{Case(i)} \\
            \bigcup_{v' \in B,u' \in V \setminus B, \ 1 \le l' \le l_p} \hat{\mathcal{F}}_{k-1,l',p-2}^{v'u'} \uplus \{v,u\} &\\
        \cup  \bigcup_{v'\in V \setminus B, \ 1 \le l' \le l_c} \hat{\mathcal{F}}_{k-1,l',p-2}^{v'v'} \uplus \{v,u\} &\text{Case(ii)} \\
            \bigcup_{w \neq v, \ (w,u) \in A} \hat{\mathcal{F}}_{k-1,l-1,p-1}^{vw} \uplus \{u\} &\text{Case(iii)} \\
            \bigcup_{w \neq v, \ (w,v) \in A} \hat{\mathcal{F}}_{k-1,l-1,p}^{vw}  &\text{Case(iv)} \\
              \emptyset & \text{Case(v)}\\
        \end{cases}
    \end{aligned}
        \label{eq-N-F-rep}
\end{equation}



The following Lemma~\ref{lem-rep-F} says that the computation in Equations~(\ref{eq-F-N}) and~(\ref{eq-N-F-rep}) gives us a correct representative set of $\mathcal{F}_{k,l,p}^{vu}$.

\begin{lemma}
    $\hat{\mathcal{F}}_{k,l,p}^{vu} \subseteq_{rep}^{2t-p} \mathcal{F}_{k,l,p}^{vu}$.
    \label{lem-rep-F}
\end{lemma}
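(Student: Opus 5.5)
We argue by induction on $k$, in the same order in which the dynamic programming fills the tables. Since $\hat{\mathcal{F}}_{k,l,p}^{vu}\subseteq_{rep}^{2t-p}\mathcal{N}_{k,l,p}^{vu}$ by Equation~(\ref{eq-F-N}), Lemma~\ref{lem-rep-trans} reduces the claim to showing $\mathcal{N}_{k,l,p}^{vu}\subseteq_{rep}^{2t-p}\mathcal{F}_{k,l,p}^{vu}$.

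\emph{Containment.} First we check $\mathcal{N}_{k,l,p}^{vu}\subseteq\mathcal{F}_{k,l,p}^{vu}$. Every set in $\mathcal{N}$ is obtained from a set $V(\mathcal{D}')$ in some $\hat{\mathcal{F}}\subseteq\mathcal{F}$ at level $k-1$ by the operations of Lemma~\ref{lem-semi-alg}. The disjointness built into $\uplus$ guarantees that the result is again the vertex set of a semi-feasible packing of length $k$ with last element $D_l^{vu}$, and that its size is exactly $p$. The sizes work out as follows: adding the new edge $\{v,u\}$ adds $2$ vertices, extending a path adds $1$, and closing a cycle adds $0$.

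\emph{Representativeness.} Fix $Y$ with $|Y|\le 2t-p$ and some $X=V(\mathcal{D})\in\mathcal{F}_{k,l,p}^{vu}$ with $X\cap Y=\emptyset$. We must find a set in $\mathcal{N}$ disjoint from $Y$. Case (i) is trivial. In the other cases we decompose $\mathcal{D}$ exactly as in the proof of Lemma~\ref{lem-semi-alg}: $X=X'\cup S$ with $X'\cap S=\emptyset$, where $X'\in\mathcal{F}_{k-1,l',p'}^{v'u'}$, and where $S=\{v,u\}$ (Case (ii), $p'=p-2$), $S=\{u\}$ (Case (iii), $p'=p-1$), or $S=\emptyset$ (Case (iv), $p'=p$). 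Set $Y'=Y\cup S$. Then $|Y'|\le 2t-p+|S|=2t-p'$ and $X'\cap Y'=\emptyset$. By the induction hypothesis $\hat{\mathcal{F}}_{k-1,l',p'}^{v'u'}\subseteq_{rep}^{2t-p'}\mathcal{F}_{k-1,l',p'}^{v'u'}$, so there is $\widehat{X}'\in\hat{\mathcal{F}}_{k-1,l',p'}^{v'u'}$ disjoint from $Y'$. Since $\widehat{X}'$ avoids $S$, the set $\widehat{X}'\cup S$ is produced by the corresponding union in Equation~(\ref{eq-N-F-rep}), so it lies in $\mathcal{N}_{k,l,p}^{vu}$. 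It also avoids $Y$.

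\emph{Main obstacle.} The delicate step is verifying that the replacement $\widehat{X}'$ really combines with $S$ into a legal element of the recursion. The reason is that $\widehat{X}'$ comes from a \emph{different} packing, so its internal path $D_{l-1}^{vw}$ may differ from the original one except at its endpoints $v$ and $w$. For Cases (iii) and (iv), the transition only uses the endpoint data $v,w$ and the arc $(w,u)$ or $(w,v)$, together with disjointness from $u$. Disjointness from $u$ is guaranteed because $u\in Y'$. In Case (iv) we need $v\in\widehat{X}'$, which holds because $v$ is the start of the stored path. In Case (ii), disjointness from $\{v,u\}$ is exactly what $Y'$ enforces.

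I would also check that the sizes stay consistent: $p'\ge 1$ and $q=2t-p'\ge 0$. This holds because a level-$(k-1)$ set has $p'\le p\le 2t$ vertices.
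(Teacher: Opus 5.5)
Your proposal is correct and follows the paper's proof. Both reduce via Lemma~\ref{lem-rep-trans} to showing $\mathcal{N}_{k,l,p}^{vu}\subseteq_{rep}^{2t-p}\mathcal{F}_{k,l,p}^{vu}$, induct on $k$, enlarge $Y$ by the vertices added in the transition, apply the inductive hypothesis with the correspondingly larger $q$, and add those vertices back. Where the paper works only Case (iii) and leaves the rest to the reader, you treat Cases (ii)--(iv) uniformly and also verify the containment $\mathcal{N}_{k,l,p}^{vu}\subseteq\mathcal{F}_{k,l,p}^{vu}$, which the definition of a representative subfamily requires but the paper never checks.
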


\begin{proof}
    We recall that the representative relation is transitive by Lemma~\ref{lem-rep-trans}. By Equation~\ref{eq-F-N}, we know that $\hat{\mathcal{F}}_{k,l,p}^{vu} \subseteq_{rep}^{2t-p} \mathcal{N}_{k,l,p}^{vu}$. If it holds that $ \mathcal{N}_{k,l,p}^{vu} \subseteq_{rep}^{2t-p} \mathcal{F}_{k,l,p}^{vu}$, then we are done because $        \hat{\mathcal{F}}_{k,l,p}^{vu} \subseteq_{rep}^{2t-p} \mathcal{N}_{k,l,p}^{vu} \subseteq_{rep}^{2t-p} \mathcal{F}_{k,l,p}^{vu}.
    $
    Therefore, we only need to prove that $ \mathcal{N}_{k,l,p}^{vu} \subseteq_{rep}^{2t-p} \mathcal{F}_{k,l,p}^{vu}$.

    We prove $ \mathcal{N}_{k,l,p}^{vu} \subseteq_{rep}^{2t-p} \mathcal{F}_{k,l,p}^{vu}$ by induction. 
    
    For $k=1$, by the initialization equation, it holds that $\mathcal{N}_{k,l,p}^{vu}=\mathcal{F}_{k,l,p}^{vu}$ and $\mathcal{N}_{k,l,p}^{vu} \subseteq_{rep}^{2t-p} \mathcal{F}_{k,l,p}^{vu}$. 
    
    By induction, we assume that 
    $$ \mathcal{N}_{i,l,p}^{vu} \subseteq_{rep}^{2t-p} \mathcal{F}_{i,l,p}^{vu}~~\mbox{for}~~ i=1\cdots k-1 \  (k>1).$$ Next, we prove that $\mathcal{N}_{k,l,p}^{vu} \subseteq_{rep}^{2t-p} \mathcal{F}_{k,l,p}^{vu}$.
    
    Without loss of generality, we take Case (iii) as an example and other cases can be handled similarly. For any vertex subset $X \in \mathcal{F}_{k,l,p}^{vu}$ in Case (iii), there must be some family $\mathcal{F}_{k-1,l-1,p-1}^{vw}$ containing $X \setminus \{u\}$. 
    
    Suppose that $Y$ is disjoint with $X$ and $|Y| \le 2t-p$. Then $Y \cup \{u\}$ is disjoint with $X \setminus \{u\}$ and $|Y \cup \{u\}| \le 2t-p+1$. 
    By the induction, we have that 
    $$\hat{\mathcal{F}}_{k-1,l-1,p-1}^{vw} \subseteq_{rep}^{2t-(p-1)} \mathcal{F}_{k-1,l-1,p-1}^{vw}.$$ 
    Thus, there must be a vertex subset $\hat{X} \in \hat{\mathcal{F}}_{k-1,l-1,p-1}^{vw}$ such that $\hat{X}$ is disjoint with $Y \cup \{u\}$. By Equation~\ref{eq-N-F-rep}, we know that $\hat{X} \cup \{u\} \in \mathcal{N}_{k,l,p}^{vu}$. Let $\hat{X}' = \hat{X} \cup \{u\}$. Since $\hat{X}$ is disjoint with $Y$, we have that for any $X \in \mathcal{F}_{k,l,p}^{vu}$, if $|X \cap Y|=\emptyset$ and $|Y| \le 2t-p$, then there exists $\hat X' \in \mathcal{N}_{k,l,p}^{vu}$ such that $|\hat{X}' \cap Y|=\emptyset$. 
    
    Therefore, we have proven that $ \mathcal{N}_{k,l,p}^{vu} \subseteq_{rep}^{2t-p} \mathcal{F}_{k,l,p}^{vu}$.
\end{proof}

The following Lemma~\ref{lem-semi-kep-rep} indicates that we can solve KEP by checking $\hat{\mathcal{F}}_{k,l,p}^{vu}$.

\begin{lemma}
    Suppose that we are given a KEP instance $(G,B,l_p,l_c,t)$. If $l_p,l_c < t$, then $(G,B,l_p,l_c,t)$ is a yes-instance if and only if there exists a tuple ($k$, $l$, $p$, $v$, $u$) such that $\hat{\mathcal{F}}_{k,l,p}^{vu} \neq \emptyset $ and either of the two conditions holds:
    \begin{itemize}
        \item $t \le k  \le p \le 2t$, $v \in B$, $u \in V \setminus B$ and $l \le l_p$,
        \item $t \le k \le p \le 2t$, $v=u \in V \setminus B$ and $l \le l_c$.
    \end{itemize}
    \label{lem-semi-kep-rep}
\end{lemma}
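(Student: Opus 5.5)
The plan is to prove the two directions separately. The reverse direction is immediate, and the forward direction follows from Lemma~\ref{lem-2t} combined with Lemma~\ref{lem-rep-F}, applied with the empty set as the "query" set $Y$.

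\emph{Reverse direction.} Suppose some $\hat{\mathcal{F}}_{k,l,p}^{vu}\neq\emptyset$ with a tuple satisfying one of the two conditions.
\begin{enumerate}
\item By construction (Lemma~\ref{lem-rep-family} returns a subfamily), $\hat{\mathcal{F}}_{k,l,p}^{vu}\subseteq \mathcal{N}_{k,l,p}^{vu}$.
\item An easy induction on $k$ over Equation~(\ref{eq-N-F-rep}) gives $\mathcal{N}_{k,l,p}^{vu}\subseteq\mathcal{F}_{k,l,p}^{vu}$. Each transition takes a set realized by a semi-feasible packing of the appropriate parameters and either extends $D$ by a vertex $u$ disjoint from it (the operator $\uplus$ guarantees disjointness) or closes $D$ into a cycle. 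This mirrors Cases (i)--(v) of Lemma~\ref{lem-semi-alg}.
\item Hence some semi-feasible $\mathcal{D}=\mathcal{P}\cup\mathcal{C}\cup D_l^{vu}$ has $c(\mathcal{D})=k\ge t$.
\item The conditions ($v\in B$, $u\notin B$, $l\le l_p$; or $v=u\notin B$, $l\le l_c$) make $D_l^{vu}$ a feasible path or cycle. So $\mathcal{D}$ is feasible and the instance is a yes-instance, exactly as in Lemma~\ref{lem-semi-kep}.
\end{enumerate}

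\emph{Forward direction.}
\begin{enumerate}
\item By Lemma~\ref{lem-2t}, there is a feasible packing $\mathcal{D}$ with $t\le c(\mathcal{D})\le 2t$.
\item Pick any member of $\mathcal{D}$ to play the role of $D_l^{vu}$ and the rest as $\mathcal{P}\cup\mathcal{C}$. Since it is feasible, one of the two conditions on $(v,u,l)$ holds.
\item Set $k=c(\mathcal{D})$ and $p=|V(\mathcal{D})|$. Then $V(\mathcal{D})\in\mathcal{F}_{k,l,p}^{vu}$.
\end{enumerate}

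We need $k\le p\le 2t$. For $k\le p$: a path of length $l$ has $l+1$ vertices and a cycle of length $l$ has $l$ vertices, so summing over the vertex-disjoint members gives $p\ge k$.

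The bound $p\le 2t$ is the main obstacle. Lemma~\ref{lem-2t} only bounds $c(\mathcal{D})\le 2t$, and paths contribute one extra vertex each, so $p$ could exceed $2t$. I would fix this by refining the minimality argument of Lemma~\ref{lem-2t}: choose $\mathcal{D}$ to be a minimal feasible packing with $c(\mathcal{D})\ge t$ under inclusion. Then removing any member $D$ drops the length below $t$, i.e.\ $c(\mathcal{D})-c(D)\le t-1$.

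Apply this with $D$ chosen to be a member with the largest value of $|V(D)|$.
\begin{itemize}
\item If $D$ is a path, then $|V(D)|=c(D)+1\le l_p+1\le t$.
\item If $D$ is a cycle, then $|V(D)|=c(D)\le t-1$.
\end{itemize}
Every other member $D'$ has $|V(D')|\le c(D')+1\le 2c(D')$, since $c(D')\ge 1$. Hence
\[
p\le |V(D)|+2\bigl(c(\mathcal{D})-c(D)\bigr)\le t+2(t-1),
\]
which is still too weak. So I would instead argue per vertex count. Choose $\mathcal{D}$ minimizing $|V(\mathcal{D})|$ among feasible packings with length at least $t$. If $p>2t$, then since each member has at most $t$ vertices, removing one member keeps the vertex count above $t$. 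I would then try to show the remaining length is still at least $t$, possibly after truncating a path, since paths can be shortened by one edge while staying feasible. This vertex-count bookkeeping is where care is needed. If it fails, the fallback is to note that the algorithm could equally run with $p$ up to $3t$, but I would first aim to establish $p\le 2t$ via the truncation argument.

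\emph{Nonemptiness of the representative family.} Take $Y=\emptyset$, which has size $0\le 2t-p$. By Lemma~\ref{lem-rep-F}, $\hat{\mathcal{F}}_{k,l,p}^{vu}$ contains a set disjoint from $\emptyset$, so it is nonempty. This completes the forward direction.
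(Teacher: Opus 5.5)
Your reverse direction matches the paper: the chain $\hat{\mathcal{F}}_{k,l,p}^{vu}\subseteq\mathcal{N}_{k,l,p}^{vu}\subseteq\mathcal{F}_{k,l,p}^{vu}$ followed by the argument of Lemma~\ref{lem-semi-kep}. Your use of $Y=\emptyset$ with Lemma~\ref{lem-rep-F} to get nonemptiness is also fine, once $p\le 2t$ is known.

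The gap is the step you flag yourself: you never prove $p\le 2t$. Inclusion-minimality only gives $3t-2$, and your vertex-minimal sketch is left open. The step you propose there (``removing one member keeps the vertex count above $t$'') tracks the wrong quantity: what must survive a deletion is the length $\ge t$, not the vertex count. The fallback of letting $p$ run up to $3t$ does not prove the lemma as stated. It would also break $q=2t-p\ge 0$, on which Lemma~\ref{lem-rep-F} and the running-time analysis rely.

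The missing idea is a dichotomy that vertex-minimality forces. Among feasible packings with $t\le c(\mathcal{D})\le 2t$ (such packings exist by Lemma~\ref{lem-2t}), take one minimizing $p=|V(\mathcal{D})|$.

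Suppose $\mathcal{D}$ contains a path $P$ and $c(\mathcal{D})>t$. Delete the last vertex of $P$, or all of $P$ if it has length $1$. The shortened path still starts in $B$ and is no longer than before. The result is a feasible packing of length $c(\mathcal{D})-1\in[t,2t]$ with strictly fewer vertices, contradicting minimality.

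Hence one of two cases holds. Either $\mathcal{D}$ consists only of cycles, so $p=c(\mathcal{D})\le 2t$. Or $c(\mathcal{D})=t$, and since every member has length at least $1$,
\[
p=\sum_{D\in\mathcal{D}}|V(D)|\le\sum_{D\in\mathcal{D}}\bigl(c(D)+1\bigr)\le 2c(\mathcal{D})=2t.
\]
This is exactly the paper's argument. You already had every ingredient: vertex-minimality, one-edge truncation of paths, and $|V(D')|\le 2c(D')$. What you missed is that minimality forces $c(\mathcal{D})=t$ whenever a path is present, after which the bound is immediate.
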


\begin{proof}
    The proof is based on the proof of Lemma~\ref{lem-semi-kep}. We first prove the backward direction. By the definition, we have \[
        \hat{\mathcal{F}}_{k,l,p}^{vu} \subseteq \mathcal{F}_{k,l,p}^{vu} \subseteq \mathcal{F}_{k,l}^{vu}.
    \]
    This implies that if $\hat{\mathcal{F}}_{k,l,p}^{vu} \neq \emptyset$, then $\mathcal{F}_{k,l}^{vu} \neq \emptyset$. Note that if a tuple $(k,l,p,v,u)$ satisfies either the conditions in Lemma~\ref{lem-semi-kep-rep}, then the tuple $(k,l,v,u)$ will naturally satisfies either the conditions in Lemma~\ref{lem-semi-kep}. Thus by Lemma~\ref{lem-semi-kep}, we have that $(G,B,l_p,l_c,t)$ is a yes-instance.
    
    Next, we consider the forward direction.
    
    Suppose that $(G,B,l_p,l_c,t)$ is a yes-instance.
    There exists a vertex subset with the smallest size $p_0$ in all $\mathcal{F}_{k,l}^{vu}$ satisfying either the two conditions in Lemma~\ref{lem-semi-kep}.
    We assume that $X_0 \in \mathcal{F}_{k_0,l_0}^{v_0u_0}$ without loss of generality.
    Since $X \in \mathcal{F}_{k_0,l_0,p_0}^{v_0u_0}$, we have that $\hat{\mathcal{F}}_{k_0,l_0,p_0}^{v_0u_0} \neq \emptyset$ by the definition of representative sets.
    Then it is enough to prove that $t \le p_0 \le 2t$. Note that $p_0 \ge k_0 \ge t$. Thus we only need to prove that $p_0 \le 2t$.
    
    Let $\mathcal{D}_0$ be the corresponding semi-feasible path-cycle packing of $X_0$.
    If there are only cycles in $\mathcal{D}_0$, we have that $p_0=k_0 \le 2t$.
    Otherwise, we can assume that there exists a path $P$ in $\mathcal{D}_0$.
    Since there are no isolated vertices in $\mathcal{D}_0$, it holds that $p_0 \le 2k_0$. If $k_0 = t$, we have that $p_0 \le 2k_0 = 2t$. Thus, we only need to consider the case $k_0 > t$.
    
    We construct another semi-feasible path-cycle packing $D'$ by removing the endpoint of $P$ (If P is a path of length 2, we remove it from $\mathcal{D}$ directly).
    Without loss of generality, we assume that $V(\mathcal{D'}) \in \mathcal{F}_{k',l'}^{v'u'}$.
    Note that we have $k' = k_0 -1 \ge t$.
    The tuple $(k',l',v',u')$ with $\mathcal{F}_{k',l'}^{v'u'} \neq \emptyset$ satisfies either the two conditions in Lemma~\ref{lem-semi-kep}.
    However, we have that $|V(\mathcal{D}')| < |X_0|$, which is a contradiction to the minimality of $X_0$. So, we know that it holds $p_0 \le 2t$.    
\end{proof}

We use the following Lemma~\ref{lem-semi-alg-kep} to compute $\mathcal{F}_{k,l}^{vu}$. 

\begin{lemma}
    There exists a dynamic programming algorithm that can compute $\mathcal{N}_{k,l,p}^{vu}$ and $\hat{\mathcal{F}}_{k,l,p}^{vu}$ for all $1 \le k \le p \le 2t$, $v \in V$, $u \in V \setminus B$ and $l \le \max(l_c,l_p)$ in $O^*(6.855^{t})$ time.
    \label{lem-semi-alg-kep}
\end{lemma}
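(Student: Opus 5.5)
We bound the running time by combining the size bound of Lemma~\ref{lem-rep-family} with the cost of forming each $\mathcal{N}_{k,l,p}^{vu}$ and then compressing it.

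\textbf{Step 1: size of the stored families.} Each $\hat{\mathcal{F}}_{k,l,p}^{vu}$ is a $p$-family that is $(2t-p)$-representative. Lemma~\ref{lem-rep-family}, applied with parameter $x_p\in(0,1)$ (chosen separately for each $p$), gives
\[
|\hat{\mathcal{F}}_{k,l,p}^{vu}| \le x_p^{-p}(1-x_p)^{-(2t-p)}2^{o(t)}.
\]

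\textbf{Step 2: size of the intermediate families.} By Equation~(\ref{eq-N-F-rep}), $\mathcal{N}_{k,l,p}^{vu}$ is a union of at most $O(n^3)$ families of the form $\hat{\mathcal{F}}_{k-1,\cdot,p'}^{\cdot\cdot}\uplus S$ with $p'\in\{p,p-1,p-2\}$. Each such operation costs polynomial time per set, so
\[
|\mathcal{N}_{k,l,p}^{vu}| \le n^{O(1)}\max_{p'\in\{p-2,p-1,p\}} x_{p'}^{-p'}(1-x_{p'})^{-(2t-p')}2^{o(t)}.
\]

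\textbf{Step 3: cost of compression.} Computing $\hat{\mathcal{F}}_{k,l,p}^{vu}$ from $\mathcal{N}_{k,l,p}^{vu}$ takes $|\mathcal{N}_{k,l,p}^{vu}|\cdot(1-x_p)^{-(2t-p)}2^{o(t)}\log n$ time. Summing over the $O(t^2 n^2 t)$ states gives a total of $O^*$ of
\[
\max_{p\le 2t}\; x_p^{-p}(1-x_p)^{-2(2t-p)}2^{o(t)}.
\]
The shift from $p'$ to $p$ changes things only by constant factors, which are absorbed into $n^{O(1)}$ or $2^{o(t)}$. Writing $p=\alpha\cdot 2t$, the exponent base per $2t$ is $x^{-\alpha}(1-x)^{-2(1-\alpha)}$.

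\textbf{Step 4: optimization, the main obstacle.} We must choose $x_p$ optimally and then maximize over $\alpha\in[0,1]$. For fixed $\alpha$, minimizing $-\alpha\ln x-2(1-\alpha)\ln(1-x)$ gives $x=\alpha/(2-\alpha)$. The resulting value is
\[
f(\alpha)=\Bigl(\tfrac{2-\alpha}{\alpha}\Bigr)^{\alpha}\Bigl(\tfrac{2-\alpha}{2(1-\alpha)}\Bigr)^{2(1-\alpha)},
\]
the standard representative-set trade-off function. Over $[0,1]$, $f$ is maximized where $\ln f$ has zero derivative. Setting the derivative to zero gives $\ln\frac{(2-\alpha)}{\alpha}=2\ln\frac{2-\alpha}{2(1-\alpha)}$, i.e., $\alpha(2-\alpha)=4(1-\alpha)^2$. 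Solving, $\alpha=1-1/\sqrt5$, and then $f=\bigl(\tfrac{1+\sqrt5}{2}\bigr)^2\cdot$(constant), i.e., $f\approx 2.618$. Per $t$, the bound becomes $f(\alpha^*)^2\approx 6.854$, which matches $6.855^{t}$ after absorbing $2^{o(t)}$. The delicate point is verifying that this choice of $x_p$ is legitimate for every $p$ simultaneously. It is, since the $x_p$ are chosen independently for each layer and the transition only mixes adjacent $p$ values. The other point to check is that the $\mathcal{N}$ bound inherits the size from $p-2$ rather than $p$ without losing more than a constant factor. We handle this by bounding $x_{p-2}^{-(p-2)}(1-x_{p-2})^{-(2t-p+2)}$ by $O(1)$ times the corresponding expression at $p$, using continuity of $f$.
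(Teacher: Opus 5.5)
Your proposal is correct and is essentially the paper's argument. It uses the same choice $x=p/(p+2q)$ (which you derive by optimizing rather than positing), the same reduction to maximizing $x^{-p}(1-x)^{-2q}$ over $p$, and reaches the same optimum, which you nicely pin down in closed form as $\bigl(\tfrac{1+\sqrt5}{2}\bigr)^4\approx 6.854$. The only inaccuracy is your claim that the shift from layer $p-2$ to layer $p$ costs an $O(1)$ factor: near $p=2t$ it is $\Theta(t^2)$, which is why the paper proves the explicit bound $s_{p,q}\le e^2(p+1)\,s_{p+1,q-1}$. Your continuity argument, stated properly as uniform continuity on $[0,1]$ of the per-$2t$ exponent of $x_p^{-p}(1-x_p)^{-(2t-p)}$, still gives a $2^{o(t)}$ factor, and that suffices.
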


\begin{proof}
    First, notice that the bottleneck of the time complexity in Lemma~\ref{lem-semi-alg-kep} is the computation of $\hat{\mathcal{F}}_{k,l,p}^{vu}$, i.e., the computation of the representative set of $\mathcal{N}_{k,l,p}^{vu}$ by Lemma~\ref{lem-rep-family}.
    Let $$q=2t-p ~~~\mbox{and}~~~ s_{p,q} = x^{-p}(1-x)^{-q}\cdot2^{o(p+q)}.$$ 
    By Lemma~\ref{lem-rep-family}, we have that $|\hat{\mathcal{F}}_{k,l,p}^{vu}| \le s_{p,q}$.  
    
    We set $x=\frac{p}{p+2q}$.   
    To simplify our analysis, we
    first prove the following purely computational result
    \begin{equation}
    s_{p,q} \le e^2 (p+1) \cdot s_{p+1,q-1} \cdot 2^{o(p+q)}.
    \label{eq-998}
    \end{equation}
 Note that $(1+\frac{1}{t})^t < e$ for all $t>0$. We have that
    \[
        \begin{aligned}
            &\frac{s_{p,q}}{s_{p+1,q-1}} = \frac{{(\frac{p}{p+2q})}^{-p}(1-\frac{p}{p+2q})^{-q}}{{(\frac{p+1}{p+2q-1})}^{-(p+1)}(1-\frac{p+1}{p+2q-1})^{-(q-1)}}\\
            &= \big( 1+\frac{1}{p+2q-1} \big) ^{p+q} \cdot \frac{(p+1)^{p+1}}{p^p} \cdot \frac{(2q-2)^{q-1}}{(2q)^q}\\
            &\le \big( 1+\frac{1}{p+2q-1} \big) ^{p+2q-1} \cdot (1+\frac{1}{p})^p (p+1) \\
            &\le e^2(p+1).
        \end{aligned}
    \]
    Thus, we get (\ref{eq-998}).

    By Equation~(\ref{eq-N-F-rep}), we have that 
    \[
        \begin{aligned}
            &|\mathcal{N}_{k,l,p}^{vu}| \le max\{n^2\cdot 2t s_{p-2,q+2} , \ n s_{p-1,q+1}, ns_{p,q}\} \\ 
            &\le 2e^4n^2t(p-1)p \cdot s_{p,q} \\
            &\le 8e^4n^2t^3\cdot s_{p,q}.
        \end{aligned}
    \]
    By Lemma~\ref{lem-rep-family}, the time of computing all $\hat{\mathcal{F}}_{k,l,p}^{vu}$ can be bounded as follows: 
    \[
        \begin{aligned}
            &\sum_{k=2}^{2t}\sum_{l=1}^{\max(l_c,l_p)}\sum_{v,u}\sum_{p=2}^{2t} |\mathcal{N}_{k,l,p}^{vu}| \cdot (1-x)^{-q} \cdot 2^{o(p+q)} \cdot \log n \\
            &\le \sum_{k=2}^{2t}\sum_{l=1}^{t}\sum_{v,u}\sum_{p=2}^{2t} 8e^4n^2t^3 \cdot s_{p,q} \cdot (1-x)^{-q} \cdot 2^{o(p+q)} \cdot \log n \\
            &= \sum_{k=2}^{2t}\sum_{l=1}^{t}\sum_{v,u} 8e^4n^2t^3\cdot 2^{o(p+q)} \cdot  \log n  \sum_{p=2}^{2t}s_{p,q} \cdot (1-x)^{-q}  \\
            &= 16e^4n^4t^5 \cdot 2^{o(p+q)} \cdot \log n \sum_{p=2}^{2t}  s_{p,q}\cdot (1-x)^{-q}  \\
            &\le 32e^4n^4t^6 \cdot 2^{o(p+q)} \cdot \log n \max_{p=2}^{2t} s_{p,q}\cdot (1-x)^{-q}.
        \end{aligned}
    \]
    Recall that $x = \frac{p}{p+2q} = \frac{p}{4t-p}$. We get that
    \[
        \begin{aligned}
            & \max_{p=2}^{2t}  \ s_{p,q}\cdot (1-x)^{-q} \\
            &= \max_{p=2}^{2t} \  x^{-p}(1-x)^{-2q}\cdot2^{o(p+q)} \\
            &= 2^{o(p+q)}\max_{p=2}^{2t}  {(\frac{p}{p+2q})}^{-p}{(1-\frac{p}{p+2q})}^{-2q}\\
            &= 2^{o(t)}\max_{p=2}^{2t-1}  {(\frac{4t-p}{p})}^{p}{(\frac{4t-p}{4t-2p})}^{4t-2p}.
        \end{aligned}
    \]
    We can assume that $p=\alpha t $ ($\alpha \in (0,2)$) and get
    \[
        \begin{aligned}
            &\max_{p=2}^{2t-1}  {(\frac{4t-p}{p})}^{p}{(\frac{4t-p}{4t-2p})}^{4t-2p}\\
            &=\max_{0 < \alpha< 2}^{}  [{(\frac{4-\alpha}{\alpha})}^{\alpha}{(\frac{4-\alpha}{4-2\alpha})}^{4-2\alpha}]^t.
        \end{aligned}
    \]
    To obtain the time complexity, our goal is to optimize function
    $
        f(\alpha) = {(\frac{4-\alpha}{\alpha})}^{\alpha}{(\frac{4-\alpha}{4-2\alpha})}^{4-2\alpha}
    $ when $0 < \alpha < 2$.
    By direct calculation, we know that $f(\alpha)$ get the maximum value when $\alpha_0 = 2-\frac{2}{5}\sqrt{5}$ and $f(\alpha_0) < 6.855$.
    Therefore, we can compute $\mathcal{N}_{k,l,p}^{vu}$ and $\hat{\mathcal{F}}_{k,l,p}^{vu}$ in time $O^*(6.855^{t})$.
\end{proof}

    

Finally, we are ready to prove Theorem~\ref{thm-kep}.
\begin{proof}{[\textbf{Proof of Theorem~\ref{thm-kep}}]}
    We prove Theorem~\ref{thm-kep} by showing that the algorithm \textsc{KEPalg-rep} can determine whether there exists a feasible path-cycle packing of total length at least~$t$ in time $O^*(6.855^t)$.
The algorithm \textsc{KEPalg-rep} consists of three main components:

\noindent \textbf{Part 1: Handling Large Paths or Cycles.}
We first check whether $l_c \ge t$ or $l_p \ge t$. In this case, if there exists a feasible cycle (resp., path) of length at least~$t$ when $l_c \ge t$ (resp., $l_p \ge t$), we return \textsc{Yes} and output the corresponding cycle or path. Otherwise, we safely set $l_c = \min(l_c, t-1)$ and $l_p = \min(l_p, t-1)$. To compute the feasible paths and cycles, we use the algorithm from~\cite{DBLP:journals/jacm/FominLPS16-rep}, which runs in time $O^*(6.75^{t+o(t)})$.

\noindent \textbf{Part 2: Computing Table Entries.}
Next, we compute the sets $\mathcal{N}_{k,\ell,p}^{vu}$ and $\hat{\mathcal{F}}_{k,\ell,p}^{vu}$ for all $1 \le k \le p \le 2t$, $1 \le \ell \le \max(l_p, l_c)$, $v \in V$, and $u \in V \setminus B$. According to Lemma~\ref{lem-semi-alg-kep}, this step can be carried out in time $O^*(6.855^t)$.

\noindent \textbf{Part 3: Answering the Instance.}
Finally, we determine whether the instance is \textsc{Yes} by inspecting the entries $\hat{\mathcal{F}}_{k,\ell,p}^{vu}$. If there exists a tuple $(k,\ell,p,v,u)$ such that $\hat{\mathcal{F}}_{k,\ell,p}^{vu}$ is nonempty and at least one of the two conditions in Lemma~\ref{lem-semi-kep-rep} holds, we return \textsc{Yes}; otherwise, we return \textsc{No}. This check can be performed within the same time bounds as Part~2.

Similar to \textsc{KEPalg}, the algorithm \textsc{KEPalg-rep} can be extended to output a corresponding feasible path-cycle packing by maintaining an additional data structure during the computation.
\end{proof}


\section{Randomized Algorithm via Color Coding}

In this section, we formally show that, by applying Lemma~\ref{lem-2t}, we can directly improve the running time of \citeauthor{DBLP:conf/ijcai/MaitiD22}'s randomized algorithm to $O^*(4^t)$. This matches the best known running time for randomized algorithms, as achieved in~\cite{DBLP:conf/ijcai/Hebert-JohnsonL24}.

\label{sec-random}

\begin{theorem}
    KEP can be randomly solved via Color Coding in $O^*(4^t)$ time.
    \label{thm-3}
\end{theorem}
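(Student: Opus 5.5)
We will prove Theorem~\ref{thm-3} by color coding combined with the bound of Lemma~\ref{lem-2t}. First we handle the case $l_p \ge t$ or $l_c \ge t$ exactly as in the main algorithm. For paths we use the (randomized or deterministic) $k$-Path algorithm restricted to start in $B$. For long cycles we use the Long Directed Cycle algorithm; only the $O^*(2^{O(t)})$ bound matters here, and for this step we simply invoke the $O^*(4^{t+o(t)})$-type randomized bounds known for these problems, or the deterministic ones already cited. Afterwards we may assume $l_p,l_c<t$. By Lemma~\ref{lem-2t}, if the instance is a yes-instance, there is a feasible packing $\mathcal{D}$ with $t\le c(\mathcal{D})\le 2t$. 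Moreover, taking $\mathcal{D}$ minimal in the same way as in the proof of Lemma~\ref{lem-semi-kep-rep}, we may further assume $|V(\mathcal{D})|\le 2t$.

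\textbf{Step 1 (random coloring).} We color $V$ uniformly at random with $2t$ colors. Since $|V(\mathcal{D})|\le 2t$, the packing $\mathcal{D}$ becomes colorful (all its vertices receive distinct colors) with probability at least $(2t)!/(2t)^{2t}\ge e^{-2t}$. Repeating $e^{2t}$ times gives constant success probability.

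\textbf{Step 2 (colorful DP).} For a fixed coloring, we run the dynamic program of Lemma~\ref{lem-semi-alg}, but we index states by color sets instead of vertex sets. Concretely, let $T[S,l,v,u]$ be true iff there is a colorful semi-feasible packing whose vertex colors are exactly $S\subseteq[2t]$ and whose last component is $D_l^{vu}$; note that $k$ is determined by $|S|$ and the number of paths. The transitions are Cases (i)--(v), where the operation $\uplus$ is replaced by requiring that the new vertex colors are not already in $S$. Correctness follows exactly as in Lemma~\ref{lem-semi-alg}, because colorfulness guarantees vertex-disjointness. The same argument shows that any colorful feasible packing is found, and Lemma~\ref{lem-semi-kep} then transfers to the final check. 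The number of states is $2^{2t}\cdot\mathrm{poly}(n)$, so each coloring takes $O^*(4^t)$ time.

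\textbf{Main obstacle.} The naive total is then $e^{2t}\cdot 4^t$, which exceeds $4^t$. We must therefore avoid paying $e^{2t}$ separately, and this is the key step to get right. To do so, we use the divide-and-color or inclusion-exclusion approach instead: we replace the $2t$-coloring by the colorful DP over the full $2t$ palette, but we count solutions with distinct colors via the standard trick of running the DP with a $(1+\epsilon)$ multiplicative weighting. The cleanest option is to combine the $2^{2t}$ DP with a random coloring using $c\cdot 2t$ colors, and to count only color subsets of size at most $2t$. With $\lambda\cdot 2t$ colors, the success probability is roughly $e^{-2t/\lambda}$, and the number of states is $\binom{2\lambda t}{\le 2t}$. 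We then optimize over $\lambda$. If this optimization does not reach exactly $4^t$, we fall back to the Maiti--Dey color coding as stated, where the running time is $O^*(2^{O(k)})$ times $e^{k}$ with $k$ the number of colored vertices. We will check precisely which exponent their analysis gives when $k=2t$; the claim is that their $O^*(8^t)$ came from using a larger vertex bound, and that Lemma~\ref{lem-2t} halves the exponent's base to $4$. I expect this step, reconciling the exact accounting of their algorithm with $|V(\mathcal{D})|\le 2t$, to be the main obstacle.
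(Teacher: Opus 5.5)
Your Steps 1 and 2 are the paper's proof. It likewise uses Lemma~\ref{lem-2t}, a uniform $2t$-coloring that is proper with probability $(2t)!/(2t)^{2t}\ge e^{-2t}$, and a DP over color subsets costing $O^*(2^{2t})$ per coloring. The paper's DP uses path tables $f(X,v)$ and $g(X,v,u)$, a one-component table $\mathcal{F}(X)$, and subset convolution for $\mathcal{G}(X)$. Your re-indexing of the semi-feasible DP of Lemma~\ref{lem-semi-alg} by color sets works equally well, provided you keep $k$ (or the number of paths) as an index, since $k$ is not determined by $S$. Your remark that a vertex-minimal solution has $|V(\mathcal{D})|\le 2t$ is a real improvement: it is what justifies using only $2t$ colors, whereas Lemma~\ref{lem-2t} bounds only $c(\mathcal{D})$. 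In the preprocessing, use the randomized $O^*(2^t)$ path and $O^*(4^t)$ cycle routines as the paper does; the deterministic $6.75^{t}$ cycle algorithm would exceed the bound.

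The gap is your final paragraph: it leaves the proof unfinished, and the repairs it proposes cannot work. The obstacle you found is real, and the paper does not overcome it. After Steps 1--2 it simply reports the per-coloring cost $O^*(4^t)$ as the running time (its ``$\ge e^{2t}$'' should read $\ge e^{-2t}$). This matches its own remark that derandomization costs $O^*(30^t)\approx O^*((4e^2)^t)$. So under the paper's accounting you were already done after Step 2.

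Under the standard accounting (constant success probability), your fixes fail. With $\lambda\cdot 2t$ colors, the DP ranges over $\sum_{j\le 2t}\binom{2\lambda t}{j}$ color subsets and needs about $\exp\bigl(2t[1-(\lambda-1)\ln\tfrac{\lambda}{\lambda-1}]\bigr)$ repetitions. The product is about $(e\lambda)^{2t}$ for $\lambda\ge 2$. For $1\le\lambda\le 2$ it is minimized near $\lambda\approx 1.3$ at roughly $4.31^{2t}\approx 18.6^t$, so it is never $4^t$. Divide-and-color is worse, at about $4^{2t}=16^t$. The ``$(1+\epsilon)$ weighting'' is not a known device for removing the $e^{2t}$ loss. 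Falling back to Maiti--Dey with $2t$ colors just reproduces $e^{2t}\cdot 4^t$. A genuine $O^*(4^t)$ bound with constant success probability needs a different tool, e.g.\ multilinear monomial detection over $2t$ variables (the route of H\'ebert-Johnson et al.), which is no longer color coding.
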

\begin{proof}
    We follow the structure of the proof of Theorem 1 in~\cite{DBLP:conf/ijcai/MaitiD22}.
    
    Let $(G,B,l_p,l_c,t)$ be an arbitrary instance of KEP.
    We first handle the case when $l_p \ge t$ and $l_c \ge t$.
    For this case, we check whether there is a feasible path of length $t$ or a feasible cycle of length at least $t$. If so, we find a solution covering at least $t$ recipients and the problem is solved.
    If no, we can set that $l_p = \min(l_p, t-1)$ and $l_c = \min(l_c,t-1)$.
    By slightly modifying the randomized algorithms of $k$-Path~\cite{DBLP:books/sp/CyganFKLMPPS15-ran-path} and Long Directed Cycle~\cite{DBLP:journals/ipl/Zehavi16-ran-cycle}, finding feasible paths and cycles of length at least $t$ can be done in time $O^*(2^t)$ and $O^*(4^t)$, respectively.

    Then we can assume that $l_p < t$ and $l_c < t$. By Lemma~\ref{lem-2t}, we have that $(G,B,l_p,l_c,t)$ is a yes-instance if and only if there exists a path-cycle packing $\mathcal{D}$ such that $t \le c(\mathcal{D}) \le 2t$.
    We color each vertices in $G$ uniformly at random from a set $S$ of $2t$ colors. We say that a color $\chi:V(G) \to [2t]$ is proper if every vertex in $\mathcal{D}$ gets a different color. Then for a yes-instance $(G,B,l_p,l_c,t)$, the graph $G$ is colored properly with probability at least \[\frac{(2t)!}{(2t)^{2t}} \ge e^{2t}.\]

    We then prove that if $G$ is colored properly, we can find $\mathcal{D}$ in $O^*(4^t)$ time. We let $f(X,v) = 1$ if there exists a feasible path $P$ ending at $v$ such that $|V(P)| = |X|$ and $\bigcup_{v \in V(P)}\{\chi(v)\} = X$ . Otherwise we let $f(X,v) = 0$. 
    For the initialization equation, we have
    \[
    f(X, v) = 
    \begin{cases} 
        0, &  (\chi(v) \notin X) \lor (|X| = 1 \land v \notin B), \\ 
        1, &  (\chi(v) \in X) \land (|X| = 1) \land (v \in B).
    \end{cases}
    \]
    For the transformation equation, we have that 
    \[
    f(X,v) = \bigvee_{\substack{u \in V(G) \\ (u,v) \in E(G)}} f(X \setminus \{\chi(v)\}, u).  
    \]

    We can also define $g(X,v,u)$ if there exists a path starting at $v$ and ending at $u$ such that $|V(P)|=|X|$ and  $\bigcup_{v \in V(P)}\{\chi(v)\} = X$. Note that $g(X,v,u)$ can be computed similarly.
    
    We let $\mathcal{F}(X)$ be 1 iff there exists a feasible path/cycle $D$ such that $|V(D)| = |X|$ and $\bigcup_{v \in V(D)}\{\chi(v)\} = X$ and let $\mathcal{G}(X)$ be 1 iff there exists a feasible path-cycle packing $\mathcal{D}$ such that $|V(\mathcal{D})| = |X|$ and $\bigcup_{v \in V(\mathcal{D})}\{\chi(v)\} = X$. By the definition, we have that
    $$
    \begin{aligned}
        \mathcal{F}(X) = ( \bigvee_{v \in V(G) \setminus B}f(X,v)) \bigvee \\ 
        (\bigvee_{v,u \in V(G) \setminus B, (u,v) \in E(G)} g(X,v,u));
    \end{aligned}
    $$
    
    $$
    \begin{aligned}
        \mathcal{G}(X) = \begin{cases}
            0 &X = \emptyset,\\
            \bigvee_{\emptyset \subsetneq Y \subseteq X} \mathcal{F}(Y) \land \mathcal{G}(X \setminus Y) &\text{Otherwise}.
        \end{cases}  
    \end{aligned}
    $$ 
    We then solve KEP by checking whether there exists $\mathcal{G}(X)=1$ such that $t \le |X| \le 2t$.

    By the equations above, it is easy to see that $f(X,v)$, $g(X,v,u)$ and $\mathcal{F}(X)$ can be computed in $O^*(2^{2t})$ time. Similar to the exact algorithms in~\cite{DBLP:conf/ijcai/XiaoW18}, $\mathcal{G}(X)$ can also be computed in $O^*(2^{2t})$ time based on the subset convolution technique~\cite{DBLP:conf/stoc/BjorklundHKK07-sub-con}. Hence, the total running time of the randomized algorithm is $O^*(4^t)$.
    
\end{proof}

\section{Conclusion and Discussion}
In this paper, we study KEP parameterized by the number of covered recipients \(t\) and present a deterministic algorithm with running time \(O^*(6.855^t)\). To achieve this result, we introduce the technique of representative sets into KEP. With modifications to the dynamic programming framework, our approach can be extended to various KEP variants~\cite{DBLP:journals/talg/KrivelevichNSYY07,DBLP:journals/dmaa/BiroMR09} as well as to other path- and cycle-related packing problems.

Several promising directions remain for further improving our algorithm. At present, our procedure computes vertex subsets for all semi-feasible path-cycle packings of length at most \(2t\), as described in Lemmas~\ref{lem-2t} and~\ref{lem-semi-kep}. It may be possible to optimize this step by focusing solely on vertex subsets of semi-feasible packings of length at most \(t\) and then determining in polynomial time whether these can be extended to a full feasible path-cycle packing.

In the realm of randomized algorithms, the best known running time is \(O^*(4^t)\). By applying Lemma~\ref{lem-2t}, one can show that the randomized algorithm of \citeauthor{DBLP:conf/ijcai/MaitiD22}~\cite{DBLP:conf/ijcai/MaitiD22} also achieves a running time of \(O^*(4^t)\). Whether this bound can be further improved remains an open and challenging problem.

Another intriguing avenue for future work is the development of even faster algorithms for KEP when \(l_p\) and \(l_c\) are small constants. This consideration is particularly significant in practical applications, where these parameters are often naturally limited in size.

\section*{Acknowledgments}
We thank the anonymous reviewers for their valuable comments and suggestions that helped improve the quality of this
paper. The work is supported by the National Natural Science
Foundation of China, under the grants 62372095 and 62502078.

\cleardoublepage

\bibliographystyle{named}
\bibliography{ijcai26}

\end{document}